\documentclass[12pt,draftclsnofoot,onecolumn]{IEEEtran}
%

\usepackage{pifont}
\usepackage{graphicx}
\usepackage{fancyhdr}
\usepackage{amsmath,amssymb,amstext,amsfonts}
\usepackage{slashbox}
\usepackage{bm}
\usepackage{algorithm}
\usepackage{array}
\usepackage{cite}

\usepackage{subfigure}
\usepackage{balance}
\usepackage{algorithmicx}
\usepackage{algpseudocode}

\usepackage{braket}
\usepackage{tabulary}
\usepackage{diagbox}
\usepackage{amsthm}

\newtheorem{theorem}{Theorem}
\newtheorem{lemma}{Lemma}

\hyphenation{op-tical net-works semi-conduc-tor}

\begin{document}

\title{Optimally Displaced Threshold Detection for Discriminating Binary Coherent States Using Imperfect Devices}

\author{Renzhi Yuan, Mufei Zhao, Shuai Han, and Julian Cheng.
\thanks{
Renzhi Yuan and Julian Cheng are with the School of Engineering, The University of British Columbia, Kelowna V1V 1V7, BC, Canada (e-mails: renzhi.yuan@ubc.ca, julian.cheng@ubc.ca). Mufei Zhao and Shuai Han are with the Communication Research Center, Harbin Institute of Technology, Harbin 150080, China (e-mails: 19B905015@stu.hit.edu.cn, hanshuai@hit.edu.cn)
}
}

\maketitle

\begin{abstract}
Because of the potential applications in quantum information processing tasks, discrimination of binary coherent states using generalized Kennedy receiver with maximum \textit{a posteriori} probability (MAP) detection has attracted increasing attentions in recent years. In this paper, we analytically study the performance of the generalized Kennedy receiver having optimally displaced threshold detection (ODTD) in a realistic situation with noises and imperfect devices. We first prove that the MAP detection for a generalized Kennedy receiver is equivalent to a threshold detection in this realistic situation. Then we analyze the properties of the optimum threshold and the optimum displacement for ODTD, and propose a heuristic greedy search algorithm to obtain them. We prove that the ODTD degenerates to the Kennedy receiver with threshold detection when the signal power is large, and we also clarify the connection between the generalized Kennedy receiver with threshold detection and the one-port homodyne detection. Numerical results show that the proposed heuristic greedy search algorithm can obtain a lower and smoother error probability than the existing works.
\end{abstract}

\begin{IEEEkeywords}
Kennedy receiver, optimal displacement, threshold detection, thermal noise.
\end{IEEEkeywords}

\IEEEpeerreviewmaketitle

\section{Introduction}

The discrimination of binary coherent states plays a crucial role in both classical and quantum information processing tasks, such as the coherent optical communications and quantum key distribution (QKD) \cite{osaki1996derivation,vilnrotter2001quantum,lance2005no,weedbrook2012gaussian,yuan2018free,grosshans2003quantum,lorenz2004continuous,bonato2009feasibility,yuan2020closed,yin2016measurement,ghorai2019asymptotic,yuan2020freespace}. It has been proven that the homodyne detection, which provides the standard quantum limit (SQL), is the best strategy to discriminate binary coherent states when only Gaussian operations and classical communication are allowed \cite{takeoka2008discrimination}. By adopting the non-Gaussian operation devices, e.g., the on/off photodetector (PD) or photon number resolving detector (PNRD), the SQL can be surpassed and the performance of the discrimination is limited by the Helstrom bound \cite{helstrom1969quantum,helstrom1970quantum}. The optimal quantum detection achieving the Helstrom bound can be realized by a Dolinar receiver \cite{dolinar1973optimum}, and it has been experimentally demonstrated \cite{cook2007optical}. However, the Dolinar receiver requires real-time feedback loops and high control complexity. Therefore, near-optimum detections with simple structure have been proposed and studied \cite{kennedy1973near,vilnrotter1984generalization,bondurant1993near,geremia2004distinguishing,takeoka2008discrimination,wittmann2008demonstration,wittmann2010discrimination,wittmann2010demonstration,becerra2015photon,dimario2018robust,dimario2019optimized,shcherbatenko2020sub,yuan2020kennedy}. An important near-optimum quantum receiver is the generalized Kennedy receiver \cite{kennedy1973near,yuan2020freespace}, which consists of displacement operation and on/off PD. However, the generalized Kennedy receiver is vulnerable against noises and device imperfections \cite{vilnrotter1984generalization,takeoka2008discrimination,yuan2020kennedy}. The performance of the generalized Kennedy receiver can be improved by optimizing the displacement operation \cite{takeoka2008discrimination,wittmann2008demonstration,yuan2020kennedy}. Besides, by replacing the on/off PD with PNRD, the generalized Kennedy receiver can achieve robust performance against noises and it attracts more and more attentions in recent years \cite{wittmann2010discrimination,wittmann2010demonstration,becerra2015photon,dimario2018robust,dimario2019optimized,yuan2020kennedy}. This is because the PNRD can provide more information of the received quantum states compared with on/off PD \cite{becerra2015photon,dimario2018robust}.

The idea of combining optimally displaced operation and PNRD for discriminating binary coherent states was first proposed \cite{wittmann2010discrimination} and experimentally demonstrated \cite{wittmann2010demonstration} to improve the performance of intermediate discrimination in a post-selected QKD scheme \cite{lorenz2004continuous}. To enable the robustness of the receiver against noises and device imperfections, the maximum \textit{a posteriori} probability (MAP) criterion was adopted to estimate the input state based on the detected number of photons of PNRD, which extends the discrimination of binary coherent states below the SQL to high input power levels \cite{vilnrotter1984generalization,dimario2018robust,dimario2019optimized,yuan2020kennedy}. The impact of the dark count noise and device imperfections on the generalized Kennedy receiver with MAP detection was simulated and verified by experiments \cite{dimario2018robust}.

However, an analytical study of the impact of noises and devices imperfection for generalized Kennedy receiver with MAP detection has not been performed. Besides, the impact of Gaussian thermal noise on the MAP detection was not considered in these works \cite{wittmann2010discrimination,wittmann2010demonstration,becerra2015photon,dimario2018robust,dimario2019optimized}. The thermal noise is generated by the load resistor of the electric circuit in the receiver, and it can become the dominated noise source compared with the dark count noise when the temperature raises due to long working hours \cite{xu2011impact}. In addition, the communication channel between two legitimate users in a QKD scheme is usually considered as an additive white Gaussian noise (AWGN) channel \cite{ghorai2019asymptotic}, where the additive white Gaussian noise can be equivalently regarded as a thermal noise of the receiver. Therefore, it is meaningful to study the impact of Gaussian thermal noise on the generalized Kennedy receiver with MAP detection for discriminating binary coherent states.

In a previous work \cite{yuan2020kennedy}, we studied the impact of thermal noise on the MAP detection and numerically studied the error probability of the generalized Kennedy receiver when the displacement is optimized after the threshold optimization. However, the device imperfection and other noise sources are not considered. Besides, the error probability given in \cite{yuan2020kennedy} cannot achieve the global optimality when the displacement is optimized after the threshold optimization. In this paper, we extend our previous study to a more realistic situation in the presence of both thermal noise and dark count noise using imperfect devices. We prove that the MAP detection for generalized Kennedy receiver in this realistic situation is equivalent to a threshold detection. We call the generalized Kennedy receiver with threshold detection adopting optimum displacement the optimally displaced threshold detection (ODTD). The main contributions of this work include:
\begin{itemize}
\item We prove that the MAP detection for generalized Kennedy receiver is equivalent to a threshold detection in the presence of both thermal noise and dark count noise using imperfect devices (Theorem \ref{theorem 1}).
\item We prove that the ODTD degenerates to the Kennedy receiver with threshold detection when the signal power is large (Theorem \ref{theorem 2}).
\item We clarify the connection between the generalized Kennedy receiver with threshold detection and the one-port homodyne detection (Theorem \ref{theorem 3}).
\item We propose a heuristic greedy search method to search the optimum threshold and the optimum displacement for ODTD, and obtain a lower and smoother error probability than the existing works.
\end{itemize}

The rest of this paper is organized as follows. In Section \ref{sect:KennedyReceiverWithThresholdDetection}, we establish the MAP detection model. Section \ref{sect:KennedyReceiverWithOptimalThresholdAndDisplacement} discusses optimum threshold and optimum displacement of the ODTD. Section \ref{sect:NumericalResults} presents some numerical results, and a brief conclusion is summarized in Section \ref{sect:Conclusion}.

\section{Generalized Kennedy Receiver With Threshold Detection}
\label{sect:KennedyReceiverWithThresholdDetection}

\begin{figure}
\begin{center}
\includegraphics[width=0.9\textwidth, draft=false]{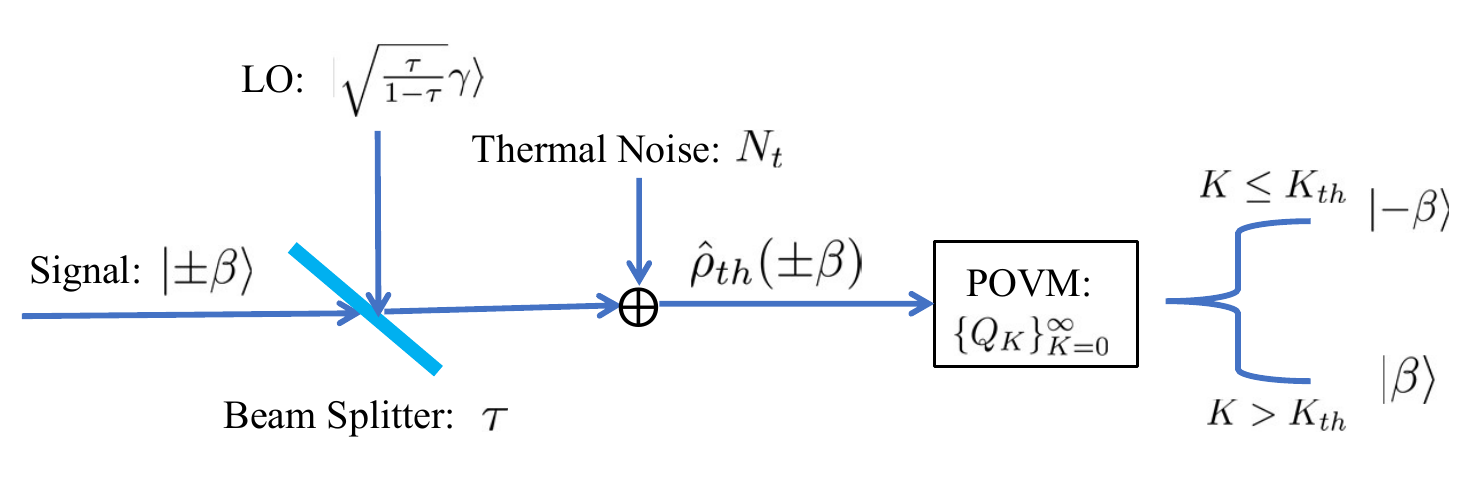}
\caption{Threshold detection for discriminating displaced binary coherent states}
\vspace{-0.4cm}
\label{Fig:Threshold_Detection}
\end{center}
\end{figure}

The configuration of threshold detection for discriminating displaced binary coherent states is shown in Fig. \ref{Fig:Threshold_Detection}. The input coherent state $\ket{\pm \beta}$ is first displaced by a displacement operator $\hat{D}(\gamma)$. The displacement operator $\hat{D}(\gamma)$ is achieved by combining the input signal with a local oscillator (LO) $\ket{\beta_{LO}}$ using a beam splitter with transmittance rate $\tau \to 1$. Here we set the nominal value of the LO as $\beta_{LO}=\sqrt{\frac{\tau}{1-\tau}}\gamma$ to null out $\ket{-\beta}$ when $\gamma=\beta$. Then the displaced coherent state is contaminated by the thermal noise, which is represented by the number of thermal photons $N_t$. The thermal noise contaminated state $\hat{\rho}_{th}(\pm \beta)$ is measured using a PNRD. The PNRD can be characterized by a set of positive-operator valued measure (POVM) operators $\{\hat{Q}_K\}_{K=0}^{\infty}$, where $\hat{Q}_K$ is the measurement operator corresponding to $K$ detected number of photons. The outcome of the detection is decided as $\ket{-\beta}$ when $K \leq K_{th}$ or $\ket{\beta}$ when $K > K_{th}$, where $K_{th}$ is a given threshold.

Because the output of the PNRD is the number of photons, it is convenient to establish our analysis in the Fock space, which is the Hilbert space spanned by a set of number states $\{\ket{n}, n=0,1,2,\cdots\}$ \cite{mandel1995optical}. Next we first present a brief review of coherent states and $P$-representation of any quantum state in Fock space. Then we use the $P$-representation to derive the probability of detecting $K$ photons of PNRD, and introduce the MAP detection for the displaced binary coherent states. In the last part of this section, we prove that the MAP detection is equivalent to a threshold detection.

\subsection{Coherent States And $P$-representation}
Due to the maturity of laser techniques, coherent states are usually employed in both classical coherent communications and quantum communications \cite{mandel1995optical,wang2007quantum,weedbrook2012gaussian}. A coherent state $\ket{\alpha}$, where $\alpha \in \mathbb{C}$ and $\mathbb{C}$ is the set of complex numbers, in this Fock space is represented by a superposition of the number states as \cite{glauber1963quantum,glauber1963coherent}
\begin{equation}
\label{Equa:CoherentState}
{\ket{\alpha}} = \sum_{n=0}^{\infty}e^{-\frac{1}{2}{|\alpha|}^2}\frac{{\alpha}^n}{\sqrt{n!}}\ket{n}.
\end{equation}
\noindent All the coherent states ${\{\ket{\alpha}, \alpha \in \mathbb{C}\}}$ form an overcomplete basis of the Hilbert space. Therefore, a density operator ${\hat{\rho}}$ of this Hilbert space can be decomposed by coherent states as
\begin{equation}
\label{Equa:DensityfromP}
\hat{\rho} = \int_{\alpha}P(\alpha)\ket{\alpha}\bra{\alpha}\mathrm{d}^2\alpha
\end{equation}
\noindent where $\mathrm{d}^2\alpha=\mathrm{d}\Re(\alpha)\mathrm{d}\Im(\alpha)$. ${P(\alpha)}$ is the $P$-function of the density operator $\hat{\rho}$ and this representation of density operator is called the $P$-representation \cite{glauber1963quantum,glauber1963coherent}.

\subsection{Probability Of Detecting $K$ Photons }
After passing through the beam splitter with transmission rate $\tau$, the coherent state $\ket{\pm \beta}$ is displaced as $\ket{\sqrt{\tau}(\pm \beta+\gamma)}$. Then the displaced coherent state is contaminated by thermal noise. Using the $P$-representation, the density operator of this thermal noise contaminated quantum state can be obtained as \cite{glauber1963quantum,glauber1963coherent}
\begin{equation}
\label{Rho_th}
\hat{\rho}_{th}(\pm \beta,\gamma)=\int_{\mathbb{C}}\frac{1}{\pi N_t}e^{-\frac{|\alpha-\sqrt{\tau}(\pm \beta+\gamma)|^2}{N_t}}\ket{\alpha}\bra{\alpha}\mathrm{d}^2\alpha
\end{equation}
\noindent where $N_t$ is the equivalent average number of photons generated by the thermal noise, and it is also called the ``thermal photons".

This thermal noise contaminated quantum state $\hat{\rho}_{th}(\pm \beta,\gamma)$ is measured by a set of POVM operators $\{\hat{Q}_K\}_{K=0}^{\infty}$, where the measurement operator $\hat{Q}_K$ can be obtained as
\begin{equation}
\label{Q_K_1}
\hat{Q}_K =\sum_{m=K}^{\infty}\binom{m}{K}\eta^i(1-\eta)^{m-K}\ket{m}\bra{m}
\end{equation}
\noindent where $\eta$ is the quantum efficiency of the PNRD. Specially, when $\eta=1$, the measurement operator $\hat{Q}_K$ degenerates to $\hat{Q}_K=\ket{K}\bra{K}$.

Then the probability of detecting $K$ photons when state $\ket{\pm \beta}$ is transmitted can be obtained by
\begin{equation}
\label{P_K}
P(K|\pm \beta,\gamma)=\text{Tr}\left(\hat{Q}_K\hat{\rho}_{th}(\pm \beta,\gamma)\right)
\end{equation}
\noindent where $\text{Tr}(\cdot)$ is the trace operation.

Substituting \eqref{Rho_th} and \eqref{Q_K_1} into \eqref{P_K}, we can obtain
\begin{equation}
\label{P_K_0}
\begin{aligned}
P(K|\pm \beta, \gamma)&=\sum_{m=K}^{\infty}\binom{m}{K}\eta^K(1-\eta)^{m-K} \left[\int_{\mathbb{C}}\frac{1}{\pi N_t}e^{-\frac{|\alpha-\sqrt{\tau}(\pm \beta+\gamma)|^2}{N_t}-|\alpha|^2}\frac{|\alpha|^{2m}}{m!}\mathrm{d}^2\alpha\right]\\
&=\sum_{m=K}^{\infty}\binom{m}{K}\eta^K(1-\eta)^{m-K}\frac{N_t^m}{(N_t+1)^{m+1}}e^{-\frac{\left \langle n \right \rangle_{\pm}}{N_t+1}}L_m\left(-\frac{\left \langle n \right \rangle_{\pm}}{N_t(N_t+1)}\right)\\
\end{aligned}
\end{equation}
\noindent where $L_m(x)$ is the Laguerre polynomial of order $m$; and $\left \langle n \right \rangle_{\pm}$ is the average number of photons of the displaced coherent state $\ket{\sqrt{\tau}(\pm \beta+\gamma)}$. When the imperfections of the displacement, including the phase noise and the mismatch between the signal and the LO, are considered, $\left \langle n \right \rangle_{\pm}$ can be approximated by \cite{becerra2015photon,dimario2018robust}
\begin{equation}
\label{n_pm}
\left \langle n \right \rangle_{\pm}= \tau (|\beta|^2+|\gamma|^2 \pm 2 \xi |\beta||\gamma|)
\end{equation}
\noindent where $\xi \in [0,1]$ is the interference visibility, which can be obtained from the interference measurement. For perfect devices and interference, $\xi=1$. Besides, if the dark count noise is also considered, $\left \langle n \right \rangle_{\pm}$ needs to be replaced by $\left \langle n' \right \rangle_{\pm} = \left \langle n \right \rangle_{\pm}+\nu$, where $\nu$ is the average number of photons due to the dark count noise.

Using the identity $\sum_{m=K}^{\infty}\binom{m}{K}(1-\frac{1}{t})^{m-K} L_m(x)=t^{K+1}e^{-(t-1)x}L_K(xt)$ and replace $\left \langle n \right \rangle_{\pm}$ with $\left \langle n' \right \rangle_{\pm}$, we can further rewrite \eqref{P_K_0} as
\begin{equation}
\label{P_K_1}
\begin{aligned}
P(K|\pm \beta, \gamma)=\frac{(\eta N_t)^K}{(\eta N_t+1)^{K+1}} e^{-\frac{\left \langle n' \right \rangle_{\pm}}{N_t+1/\eta}}L_K\left(-\frac{\left \langle n' \right \rangle_{\pm}}{N_t(\eta N_t+1)}\right).
\end{aligned}
\end{equation}

Based on the probability of detecting $K$ photons, we can now introduce the MAP criteria to estimate the transmitted state, which is based on the test
\begin{equation}
\label{MAP test}
p_0 P(K|- \beta, \gamma)\mathop{\gtreqless} \limits_{\ket{\beta}}^{\ket{-\beta}} p_1 P(K|\beta, \gamma)
\end{equation}
\noindent where $p_0$ and $p_1$ are the prior probabilities of transmitting $\ket{-\beta}$ and $\ket{\beta}$, respectively.

\subsection{Threshold Detection}
\label{MAPandThresholdDetection}
In our previous work \cite{yuan2020kennedy}, we proved that the MAP detection \eqref{MAP test} is equivalent to a threshold detection when the displacement $\gamma=\beta$ and only thermal noise is considered. Here we generalize this result to a more realistic situation with arbitrary displacement $\gamma$, and both thermal noise and dark count noise are considered using imperfect device.

\begin{theorem}\label{theorem 1}
When both thermal noise and dark count noise are considered, the MAP detection \eqref{MAP test} using imperfect device is equivalent to a threshold detection
\begin{equation}
\label{Threshold test}
K \mathop{\lesseqgtr} \limits_{\ket{\beta}}^{\ket{-\beta}} K_{th}.
\end{equation}
\end{theorem}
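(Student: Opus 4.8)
The plan is to reduce the Bayesian comparison \eqref{MAP test} to a statement about the likelihood ratio $\Lambda(K):=P(K|-\beta,\gamma)/P(K|\beta,\gamma)$ and then to show that $\Lambda(K)$ is strictly monotone in $K$; monotonicity is exactly what turns the pointwise test ``$\Lambda(K)$ versus $p_1/p_0$'' into a single threshold on $K$. First I would divide both sides of \eqref{MAP test} by the positive quantity $p_0 P(K|\beta,\gamma)$ and substitute the closed form \eqref{P_K_1}. The prefactor $(\eta N_t)^K/(\eta N_t+1)^{K+1}$ does not depend on the sign of $\beta$ and hence cancels, leaving
\begin{equation}
\Lambda(K)=\exp\!\left(\frac{\langle n'\rangle_+-\langle n'\rangle_-}{N_t+1/\eta}\right)\frac{L_K(-y_-)}{L_K(-y_+)},\qquad y_\pm:=\frac{\langle n'\rangle_\pm}{N_t(\eta N_t+1)}.
\end{equation}

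Second, I would record the ordering $\langle n'\rangle_-\le\langle n'\rangle_+$, hence $0\le y_-\le y_+$, which is immediate from \eqref{n_pm} together with $\langle n'\rangle_\pm=\langle n\rangle_\pm+\nu$, since $\tau,\xi,|\beta|,|\gamma|\ge0$ makes the cross term $2\tau\xi|\beta||\gamma|$ nonnegative. The exponential factor is then a positive constant independent of $K$, so the monotonicity of $\Lambda$ reduces to the monotonicity in $K$ of the Laguerre ratio $R(K):=L_K(-y_-)/L_K(-y_+)$. Because $L_K(-y)=\sum_{j=0}^K\binom{K}{j}y^j/j!>0$ for $y\ge0$, every quantity here is strictly positive and $R(K)$ is well defined.

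Third---and this is the crux---I would prove that $R(K)$ is strictly decreasing, equivalently that $D(K):=L_{K+1}(-y_-)L_K(-y_+)-L_K(-y_-)L_{K+1}(-y_+)<0$ for all $K\ge0$. The natural tool is the three-term recurrence $(K+1)L_{K+1}(x)=(2K+1-x)L_K(x)-KL_{K-1}(x)$ evaluated at $x=-y_\pm$. Substituting it into $(K+1)D(K)$, the $(2K+1)$ contributions cancel and the remaining terms collapse to the self-referential relation
\begin{equation}
(K+1)D(K)=(y_--y_+)\,L_K(-y_-)L_K(-y_+)+K\,D(K-1).
\end{equation}
Since $y_-<y_+$ and the Laguerre values are positive, the driving term is strictly negative; an induction then closes the argument, the base case $D(0)=y_--y_+<0$ holding directly and $D(K-1)\le0$ forcing $(K+1)D(K)<0$. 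Therefore $\Lambda(K)$ is strictly decreasing, so $\{K:\Lambda(K)>p_1/p_0\}=\{K\le K_{th}\}$ for a unique integer $K_{th}$, which is precisely \eqref{Threshold test}.

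The step I expect to be the main obstacle is the third one: monotonicity of a ratio of Laguerre polynomials is not visible from the series form, and a term-by-term comparison gives only $R(K)\le1$, not monotonicity. The key is recognizing that the three-term recurrence converts $D(K)$ into a relation with a manifestly negative driving term plus a copy of $D(K-1)$, which is exactly what an induction needs. The degenerate case $y_-=y_+$ (when $\xi$, $|\beta|$, or $|\gamma|$ vanishes) merely makes $\Lambda$ constant and the discrimination trivial, so it does not affect the threshold structure.
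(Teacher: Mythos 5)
Your proposal is correct, and the reduction steps (cancel the $K$-independent prefactors and the exponential, order $y_-\le y_+$ via \eqref{n_pm}, and reduce everything to a two-point comparison of Laguerre ratios) mirror the paper exactly; the inequality you must establish, $L_{K+1}(-y_-)L_K(-y_+)<L_K(-y_-)L_{K+1}(-y_+)$, is precisely the paper's \eqref{F_K}. Where you genuinely diverge is in how that inequality is proved. The paper differentiates the ratio $L_{K+1}(x)/L_K(x)$, uses $\frac{\mathrm{d}}{\mathrm{d}x}L_m(x)=-L_{m-1}^1(x)$ and $L_m(x)=L_m^1(x)-L_{m-1}^1(x)$ to reduce the sign of the derivative to a generalized Tur\'an-type inequality $L_{m-1}^{\alpha}(x)L_n^{\alpha}(x)>L_m^{\alpha}(x)L_{n-1}^{\alpha}(x)$, which it obtains by chaining the classical Tur\'an inequalities cited from Skovgaard (Lemmas \ref{lemma_1} and \ref{lemma_2}); this proves monotonicity in $x$ on all of $(-\infty,0]$ and then evaluates at the two points $-y_+\le -y_-$. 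You instead fix the two points and run an induction in $K$ on the cross-product $D(K)$ via the three-term recurrence $(K+1)L_{K+1}(x)=(2K+1-x)L_K(x)-KL_{K-1}(x)$; I checked the collapse $(K+1)D(K)=(y_--y_+)L_K(-y_-)L_K(-y_+)+K\,D(K-1)$ and the base case $D(0)=y_--y_+$, and both are right. Your route is self-contained and more elementary --- it needs only the recurrence, positivity of $L_K(-y)$ for $y\ge0$, and induction, with no appeal to the Tur\'an literature --- and it even yields strictness. The paper's route buys a slightly stronger intermediate fact (monotonicity of the ratio over the whole half-line, and the Tur\'an-type lemma for general order $\alpha$ and general indices $m>n$, which it reuses implicitly elsewhere), at the cost of invoking an external inequality. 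Either argument completes the theorem; only the degenerate case $y_-=y_+$ needs the one-line remark you already supply.
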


\begin{proof}
The MAP test in \eqref{MAP test} is equivalent to
\begin{equation}
\label{MAP test_1}
\frac{P(K|\beta, \gamma)}{P(K|-\beta, \gamma)}\mathop{\lesseqgtr} \limits_{\ket{\beta}}^{\ket{-\beta}} \frac{p_0}{p_1}.
\end{equation}

Let $g(K)\triangleq P(K|\beta, \gamma)/P(K|-\beta, \gamma)$. If $g(K)$ is an increasing function of $K$, then there exists an integer $K_{th}$: for any $K\leq K_{th}$, $g(K)\leq p_0/p_1$ and state $\ket{-\beta}$ is selected; for any $K>K_{th}$, $g(K)>p_0/p_1$ and state $\ket{\beta}$ is selected. Clearly this is the threshold test in \eqref{Threshold test}. Therefore, the key is to prove that $g(K)$ is an increasing function of $K$, which is equivalent to prove that
\begin{equation}
\label{F_K}
\frac{P(K+1|\beta, \gamma)}{P(K|\beta, \gamma)} \geq \frac{P(K+1|-\beta, \gamma)}{P(K|-\beta, \gamma)}
\end{equation}
\noindent for any $K \geq 0$.

By substituting \eqref{P_K_1} into \eqref{F_K} and canceling the same terms on both sides, we can obtain
\begin{equation}
\begin{aligned}
\frac{L_{K+1}\left(-\frac{\left \langle n' \right \rangle_{+}}{N_t(\eta N_t+1)}\right)}{L_{K}\left(-\frac{\left \langle n' \right \rangle_{+}}{N_t(\eta N_t+1)}\right)} \geq \frac{L_{K+1}\left(-\frac{\left \langle n' \right \rangle_{-}}{N_t(\eta N_t+1)}\right)}{L_{K}\left(-\frac{\left \langle n' \right \rangle_{-}}{N_t(\eta N_t+1)}\right)}.
\end{aligned}
\end{equation}

Because $-\frac{\left \langle n' \right \rangle_{+}}{N_t(\eta N_t+1)} \leq -\frac{\left \langle n' \right \rangle_{-}}{N_t(\eta N_t+1)}$, this indicates that $f(x)\triangleq \frac{L_{K+1}(x)}{L_K(x)}$ needs to be a decreasing function of $x$ for any $x\leq 0$ and $K \geq 0$,  which is directly followed from Lemma \ref{lemma_2} in Appendix \ref{Appendix A}. Therefore, $g(K)$ is an increasing function of $K$; and the MAP detection in \eqref{MAP test} is equivalent to the threshold detection in \eqref{Threshold test}.
\end{proof}

Theorem \ref{theorem 1} indicates that instead of calculating the posteriori probability, we can calculate the threshold $K_{th}$ in advance and then decide the outcome based on the threshold test in \eqref{Threshold test}. This can simplify the design of the receiver.

\section{Optimally Displaced Threshold Detection}
\label{sect:KennedyReceiverWithOptimalThresholdAndDisplacement}
We have introduced the threshold detection for discriminating displaced binary coherent states. The error probability of the discrimination given threshold $K_{th}$ and displacement $\gamma$ is $P_{e}(K_{th},\gamma)=p_0 \text{Pr}(\left.K>K_{th}\right|\ket{-\beta})+p_1\text{Pr}(\left.K\leq K_{th}\right|\ket{\beta})$, which can be obtained as
\begin{equation}
\label{Pe_0}
\begin{aligned}
P_{e}(K_{th},\gamma)&=p_0 \sum_{K=K_{th}+1}^{\infty} P(K|-\beta, \gamma)+ p_1 \sum_{K=0}^{K_{th}} P(K|\beta, \gamma)\\
&=p_0+p_1 \sum_{K=0}^{K_{th}} P(K|\beta, \gamma)-p_0\sum_{K=0}^{K_{th}} P(K|-\beta, \gamma).
\end{aligned}
\end{equation}
\noindent Without loss of generality, we suppose $\beta$ and $\gamma$ are real numbers. Then by substituting \eqref{P_K_1} and \eqref{n_pm} into \eqref{Pe_0}, we can obtain an explicit expression for the error probability $P_e(K_{th},\gamma)$ as
\begin{equation}
\label{Pe_1}
\begin{aligned}
P_{e}(K_{th},\gamma)&=p_0+p_1 e^{-\frac{\tau (\beta^2+\gamma^2 + 2 \xi \beta\gamma)+\nu}{N_t+1/\eta}} \sum_{K=0}^{K_{th}}\frac{(\eta N_t)^K}{(\eta N_t+1)^{K+1}} L_K\left(-\frac{\tau (\beta^2+\gamma^2 + 2 \xi \beta\gamma)+\nu}{N_t(\eta N_t+1)}\right)\\
&\quad \quad -p_0 e^{-\frac{\tau (\beta^2+\gamma^2 - 2 \xi \beta\gamma)+\nu}{N_t+1/\eta}} \sum_{K=0}^{K_{th}} \frac{(\eta N_t)^K}{(\eta N_t+1)^{K+1}} L_K\left(-\frac{\tau (\beta^2+\gamma^2 - 2 \xi \beta\gamma)+\nu}{N_t(\eta N_t+1)}\right).
\end{aligned}
\end{equation}

The ODTD adopts the optimum threshold $K_{th}^*$ and the optimum displacement $\gamma^*$ to achieve the minimum error probability of discriminating binary coherent states. Then the optimum threshold $K_{th}^*$ and optimum displacement $\gamma^*$ for minimum error probability can be obtained by solving the optimization problem
\begin{equation}
\label{OptimizationForThresholdAndDisplacement}
(K_{th}^*,\gamma^*)=\mathop{\arg\min}_{(K_{th},\gamma)} P_e(K_{th},\gamma).
\end{equation}

However, it is challenging to solve \eqref{OptimizationForThresholdAndDisplacement} analytically. Before going further, we first discuss a simple case with $\gamma=\beta$.

\subsection{Kennedy Receiver With Threshold Detection ($\gamma=\beta$)}
\label{KennedyReceiverWithThresholdDetection}
When $\gamma=\beta$, the ODTD becomes a Kennedy receiver with threshold detection \cite{vilnrotter1984generalization,yuan2020kennedy}. Then we only need to optimize the threshold $K_{th}$.

Substituting $\gamma=\beta$ into \eqref{n_pm}, we can obtain
\begin{equation}\label{n_pm_1}
\left \langle n' \right \rangle_{\pm}=2\tau N_s(1\pm \xi)+\nu
\end{equation}
\noindent where $N_s=|\beta|^2$ is the average number of photons per bit of the transmitted signal, which is also called the ``signal photons".

Substituting \eqref{P_K_1} and \eqref{n_pm_1} into \eqref{Pe_0}, we can obtain the error probability $P_{e}(K_{th})$ of Kennedy receiver with threshold detection as
\begin{equation}
\label{Pe_Threshold_0}
\begin{aligned}
P_{e}(K_{th})&=p_0+p_1 e^{-\frac{2\tau N_s(1+\xi)+\nu}{N_t+1/\eta}} \sum_{K=0}^{K_{th}}\frac{(\eta N_t)^K}{(\eta N_t+1)^{K+1}} L_K\left(-\frac{2\tau N_s(1+\xi)+\nu}{N_t(\eta N_t+1)}\right)\\
&\quad \quad -p_0 e^{-\frac{2\tau N_s (1-\xi)+\nu}{N_t+1/\eta}} \sum_{K=0}^{K_{th}} \frac{(\eta N_t)^K}{(\eta N_t+1)^{K+1}} L_K\left(-\frac{2\tau N_s (1-\xi)+\nu}{N_t(\eta N_t+1)}\right).
\end{aligned}
\end{equation}

\subsubsection{Optimum Threshold $K_{th}^*$}
From \eqref{Pe_Threshold_0} we can observe that when $K_{th}$ increases, the second term of the error probability increases and the third term decreases. Therefore, after reaching the optimum threshold $K_{th}^*$, when the threshold increases from $K_{th}$ to $K_{th}+1$, the increment of the second term must be greater than or equal to the decrement of the third term. This means that the optimum threshold $K_{th}^*$ is the minimum $K_{th}$ which satisfies the following inequality
\begin{equation}
\label{Optimal_Threshold_0}
\begin{aligned}
p_1 e^{-\frac{2\tau N_s(1+\xi)+\nu}{N_t+1/\eta}}L_{K_{th}+1}\left(-\frac{2\tau N_s(1+\xi)+\nu}{N_t(\eta N_t+1)}\right)
\geq
p_0  e^{-\frac{2\tau N_s(1-\xi)+\nu}{N_t+1/\eta}}L_{K_{th}+1}\left(-\frac{2\tau N_s(1-\xi)+\nu}{N_t(\eta N_t+1)}\right).
\end{aligned}
\end{equation}
\noindent By canceling the same terms on both sides, we can rewrite this inequality as
\begin{equation}
\label{Optimal_Threshold_1}
\begin{aligned}
\frac{L_{K_{th}+1}\left(-\frac{2\tau N_s(1+\xi)+\nu}{N_t(\eta N_t+1)}\right)}{L_{K_{th}+1}\left(-\frac{2\tau N_s(1-\xi)+\nu}{N_t(\eta N_t+1)}\right)} \geq \frac{p_0}{p_1} e^{\frac{4\tau \xi N_s}{N_t+1/\eta}}.
\end{aligned}
\end{equation}
\noindent Then the optimum threshold $K_{th}^*$ can be obtained by solving the following optimization problem
\begin{equation}
\label{OptimizationForThreshold_0}
\begin{aligned}
K_{th}^*=\mathop{\min} &\quad K_{th}\\
\text{s.t.} &\quad \frac{L_{K_{th}+1}\left(-\frac{2\tau N_s(1+\xi)+\nu}{N_t(\eta N_t+1)}\right)}{L_{K_{th}+1}\left(-\frac{2\tau N_s(1-\xi)+\nu}{N_t(\eta N_t+1)}\right)} \geq \frac{p_0}{p_1} e^{\frac{4\tau \xi N_s}{N_t+1/\eta}}.
\end{aligned}
\end{equation}
\noindent Compared with the optimization problem \eqref{OptimizationForThresholdAndDisplacement}, this optimization problem \eqref{OptimizationForThreshold_0} is much simpler.

\subsubsection{Lower Bound Of $P_{e}(K_{th})$}
\label{lowerbounds}
Because the interference visibility $\xi$ is usually smaller than 1, the last two terms of \eqref{Pe_Threshold_0} diminishes to zero as $N_s$ approaches $\infty$. Then we have
\begin{equation}\label{Lowerbound0}
\lim \limits_{N_s \to \infty} P_e(K_{th})= p_0
\end{equation}
\noindent for $\xi<1$. This indicates that the imperfect interference visibility greatly degrades the performance of the receiver when the signal power is large.

The imperfect interference visibility is mainly due to the phase noise and the mismatch between the signal and the LO. By introducing phase-locked loops into the receiver, the interference visibility can be greatly improved. In the reported works \cite{becerra2015photon,dimario2018robust}, the interference visibility can achieve $\xi=99.8\%$, which makes it possible for discriminating binary coherent states below SQL with high signal power. If the interference visibility $\xi \to 1$, then the error probability $P_e(K_{th})$ becomes
\begin{equation}
\label{Pe_Threshold_1}
\begin{aligned}
P_{e}(K_{th})|_{\xi \to 1}&=p_0 \left(\frac{\eta N_t}{\eta N_t+1}\right)^{K_{th}+1}+p_1 e^{-\frac{4\tau N_s+\nu}{N_t+1/\eta}}\sum_{K=0}^{K_{th}}\frac{(\eta N_t)^K}{(\eta N_t+1)^{K+1}} L_K\left(-\frac{4\tau N_s+\nu}{N_t(\eta N_t+1)}\right).
\end{aligned}
\end{equation}
\noindent When $N_s \to \infty$, the second term of \eqref{Pe_Threshold_1} diminishes to zero and the error probability has a lower bound
\begin{equation}\label{Lowerbound1}
\lim \limits_{N_s \to \infty} P_e(K_{th})|_{\xi \to 1}= p_0 \left(\frac{\eta N_t}{\eta N_t+1}\right)^{K_{th}+1}.
\end{equation}
\noindent Because we cannot achieve a perfect interference visibility, $P_{e}(K_{th})\geq P_{e}(K_{th})|_{\xi \to 1}$ always holds. Then \eqref{Lowerbound1} is also a lower bound for $P_{e}(K_{th})$.

The original Kennedy receiver (with on/off detection) can be obtained by setting $K_{th}=0$. Therefore, from \eqref{Lowerbound1} we can observe that the lower bound of error probability for the threshold detection with threshold $K_{th}$ has a gain of $\left(\frac{\eta N_t}{\eta N_t+1}\right)^{K_{th}}$ compared with the lower bound of error probability for the original Kennedy receiver. When the thermal photons $N_t$ is much smaller than 1, this gain approaches $(\eta N_t)^{K_{th}}$. The lower bound \eqref{Lowerbound1} directly shows how the threshold limits the performance of the receiver under large signal power.

Obviously, the lower bound given in \eqref{Lowerbound1} is not a satisfactory bound because it only bounds the error probability for a given threshold $K_{th}$. From the condition \eqref{Optimal_Threshold_1}, we know that the optimum threshold $K_{th}^*$ varies with the signal photons $N_s$, thus the bound in \eqref{Lowerbound1} varies with signal photons, too. Therefore, a more meaningful lower bound should be irrelevant to $K_{th}$, which can be approximated by the envelope of a set of curves $\{P_{e}(K_{th}), K_{th}=0,1,2,\cdots\}$. However, because $K_{th}$ is a discrete variable, the envelope of these curves is undefined. Then the derivation of an analytical tight lower bound for the error probability is still an open question. In Section \ref{sect:NumericalResults}, we will use a curve-fitting method to find a practical asymptotic lower bound for the error probability.

\subsection{Optimally Displaced Threshold Detection ($\gamma \neq \beta$)}\label{Optimizations}
The Kennedy receiver can beat the SQL when the average number of signal photons $N_s>0.4$ \cite{takeoka2008discrimination}. By optimizing the displacement of Kennedy receiver with on/off photodetector, the optimized displacement receiver (ODR) can beat the SQL under all signal photons \cite{takeoka2008discrimination,wittmann2008demonstration}. Inspired by this, in our previous work \cite{yuan2020kennedy}, we studied the performance of the discrimination when the displacement is further optimized after the optimization of threshold. However, this may not achieve the global minimum error probability of the discrimination. A more rigorous method is to solve the two-variable optimization problem in \eqref{OptimizationForThresholdAndDisplacement}.

As we have mentioned, it is challenging to solve \eqref{OptimizationForThresholdAndDisplacement} analytically, and even the numerical solution for \eqref{OptimizationForThresholdAndDisplacement} is not trivial. Because the threshold $K_{th}$ is a discrete variable, the ordinary gradient descent algorithm cannot be directly applied. An intuitive idea is to use the coordinate descent algorithm. Coordinate descent successively minimizes one single coordinate to find the minimum of a function. In our case, the algorithm minimize one variable, either $K_{th}$ or $\gamma$, while fixing the other one at each iteration.

\subsubsection{Optimize $K_{th}$ Given $\gamma$}
For a fixing displacement $\gamma$, the optimum threshold $K_{th}^*(\gamma)$ can be obtained by minimizing the error probability in \eqref{Pe_1} over $K_{th}$. From \eqref{Pe_1} we can observe that as $K_{th}$ increases, the second term of the error probability increases and the third term decreases. Therefore, similar to \eqref{OptimizationForThreshold_0}, we can obtain the optimum threshold $K_{th}^*(\gamma)$ by solving the following optimization problem:
\begin{equation}
\label{OptimizationForThreshold_1}
\begin{aligned}
K_{th}^*(\gamma)=\mathop{\min} &\quad K_{th}\\
\text{s.t.} &\quad \frac{L_{K_{th}+1}\left(-\frac{\tau (\beta^2+\gamma^2 + 2 \xi \beta\gamma)+\nu}{N_t(\eta N_t+1)}\right)}{L_{K_{th}+1}\left(-\frac{\tau (\beta^2+\gamma^2 - 2 \xi \beta\gamma)+\nu}{N_t(\eta N_t+1)}\right)} \geq \frac{p_0}{p_1} e^{\frac{4\tau \xi \beta \gamma}{N_t+1/\eta}}.
\end{aligned}
\end{equation}
\noindent When $\gamma=\beta$, this optimization problem \eqref{OptimizationForThreshold_1} degenerates to the optimization problem \eqref{OptimizationForThreshold_0}.

\subsubsection{Optimize $\gamma$ Given $K_{th}$}
For a fixing threshold $K_{th}$, the optimum displacement $\gamma^*$ can be obtained by letting the partial derivative $\frac{\partial P_{e}(K_{th},\gamma)}{\partial \gamma}$ equal zero. After some algebra, one can find the optimum displacement $\gamma^*$ satisfying
\begin{equation}
\label{OptimizationForDisplacement_0}
\begin{aligned}
\frac{\gamma^*+\xi\beta}{\gamma^*-\xi\beta}=\frac{p_0}{p_1}e^{\frac{4\tau\xi\beta\gamma^*}{N_t+1/\eta}}
\frac{\sum_{K=0}^{K_{th}}\frac{(\eta N_t)^K}{(\eta N_t+1)^{K+1}}h(\gamma^*;K,-\beta)}{\sum_{K=0}^{K_{th}}\frac{(\eta N_t)^K}{(\eta N_t+1)^{K+1}}h(\gamma^*;K,\beta)}
\end{aligned}
\end{equation}
\noindent where $h(\gamma^*;K,\pm \beta)$ is defined as
\begin{equation}\label{h_K}
h(\gamma^*;K,\pm \beta)\triangleq
\left\{
\begin{array}{ll}
1, \text{ if } K=0\\
L_{K}\left(-\frac{\tau (\beta^2+\gamma^2 \pm 2 \xi \beta\gamma)+\nu}{N_t(\eta N_t+1)}\right)-\frac{1}{\eta N_t}L_{K-1}^1 \left(-\frac{\tau (\beta^2+\gamma^2 \pm 2 \xi \beta\gamma)+\nu}{N_t(\eta N_t+1)}\right), \text{ if } K>0
\end{array}
\right.
\end{equation}
\noindent where $L_m^{\alpha}(x)$ is the generalized Laguerre polynomial of order $m$ with parameter $\alpha$. Eq. \eqref{OptimizationForDisplacement_0} degenerates to the condition for ODR given in \cite[eq. (23)]{takeoka2008discrimination} when $K_{th}=0$, $\tau=\xi=1$, $p_0=p_1=0.5$, and $N_t=0$.

From \eqref{h_K}, we can see that the computational complexity of solving \eqref{OptimizationForDisplacement_0} increases greatly as $K_{th}$ increases when $K_{th}>0$. Therefore, in practical implementation, it is more efficient to solve the single-variable optimization problem $\gamma^*(K_{th})=\mathop{\arg\min}_{\gamma} P_e(K_{th},\gamma)$ using the line-search algorithm compared with solving equation \eqref{OptimizationForDisplacement_0} when $K_{th}$ is large.

The coordinate descent searches the optimum solution to \eqref{OptimizationForThresholdAndDisplacement} by successively optimizing $K_{th}$ through solving \eqref{OptimizationForThreshold_1} and optimizing $\gamma$ through solving \eqref{OptimizationForDisplacement_0}. However, because $K_{th}$ is a discrete variable, the update of the threshold $K_{th}$ by solving \eqref{OptimizationForThreshold_1} can fail when the variation of two successive $\gamma$ is too small. Then the coordinate descent algorithm will be trapped at these points. To solve this problem, we propose a heuristic greedy search algorithm.

\subsubsection{Heuristic Greedy Search}\label{algorithm}
The optimization of $Pe(K_{th},\gamma)$ is equivalent to the optimization of the single-variable function $P_e(K_{th},\gamma^*(K_{th}))$. Then the objective optimization problem in \eqref{OptimizationForThresholdAndDisplacement} is equivalent to the following single-variable optimization problem
\begin{equation}
\label{OptimizationForThresholdAndDisplacement_1}
K_{th}^*=\mathop{\arg\min}_{K_{th}} P_e(K_{th},\gamma^*(K_{th})).
\end{equation}

Because $K_{th}$ ranges from 0 to $\infty$, we cannot use the brute-force search to solve \eqref{OptimizationForThresholdAndDisplacement_1}. However, if the discrete function $P_e(K_{th},\gamma^*(K_{th}))$ have the following convex property
\begin{equation}\label{Relation}
\left\{
\begin{array}{ll}
P_e(K_{th},\gamma^*(K_{th}))>P_e(K_{th}+1,\gamma^*(K_{th})) \text{ for } K_{th}<K_{th}^*\\
P_e(K_{th},\gamma^*(K_{th}))<P_e(K_{th}+1,\gamma^*(K_{th})) \text{ for } K_{th}\leq K_{th}^*
\end{array}
\right.
\end{equation}
\noindent then the local optimality of $P_e(K_{th},\gamma^*(K_{th}))$ guarantees the global optimality \cite{murota2001relationship}. Then we can use a heuristic greedy search to solve the optimization problem in \eqref{OptimizationForThresholdAndDisplacement_1}.

\renewcommand{\algorithmicrequire}{\textbf{Input:}}
\renewcommand{\algorithmicensure}{\textbf{Output:}}
\begin{algorithm}
\caption{Heuristic Greedy Search }
        \begin{algorithmic}[1]
        \State Initialization: $i \gets 1$, $K_{th,i} \gets K_{th}^*(\xi\beta)$,$\gamma_i \gets \gamma^*(K_{th,i})$, $P_{e} \gets 1$, $d \gets 1$
        \If {$P_e(K_{th,i},\gamma_i)<P_e(K_{th,i}+1,\gamma^*(K_{th,i}+1))$}
        \State $d \gets -1$
        \EndIf
        \While {$(P_e-P_{e}(K_{th,i},\gamma_i))/P_e > \epsilon$ \textbf{and} $i \leq M$  \textbf{and} $K_{th,i} \geq 0$}
            \State $P_{e} \gets P_{e}(K_{th,i},\gamma_{i})$
            \State $K_{th,i+1} \gets K_{th,i}+d$
            \State $\gamma_{i+1} \gets \gamma^*(K_{th,i+1})$
            \State $i \gets i+1$
        \EndWhile
        \State $K_{th}^* \gets K_{th,i-1}, \gamma^* \gets \gamma_{i-1}, P_e \gets P_e(K_{th}^*,\gamma^*)$
        \end{algorithmic}\label{Heuristic Searching}
\end{algorithm}

The pseudocode of heuristic greedy search is summarized in Algorithm \ref{Heuristic Searching}, where $\gamma_i$ and $K_{th,i}$ are the displacement and threshold for the $i$th iteration, respectively; $\epsilon$ is the required relative error; $M$ is the maximum number of iterations; $d$ is searching direction for the threshold. Line 1 is the initialization of all variables. Lines 2-4 decide the searching direction of the threshold. Line 5 defines the halt condition. Lines 6-9 update all variables. Line 11 presents the output of the algorithm.

Specially, we set the initial values for the threshold and displacement as $K_{th,1}=K_{th}^*(\xi\beta)$ and $\gamma_1=\gamma^*(K_{th,1})$, respectively. This is based on the result of the following Theorem \ref{theorem 2}.

\begin{theorem}\label{theorem 2}
For any threshold $K_{th} \geq 0$, the optimum displacement $\gamma^*$ approaches $\xi\beta$ when $\beta$ approaches $\infty$.
\end{theorem}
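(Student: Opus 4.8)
The plan is to work directly from the stationarity condition \eqref{OptimizationForDisplacement_0} that characterizes $\gamma^*$ for a fixed threshold $K_{th}$, and to show that its root is squeezed onto $\xi\beta$ as $\beta\to\infty$. First I would observe that for large $\beta$ the Laguerre arguments $x_\pm\triangleq -\frac{\langle n'\rangle_\pm}{N_t(\eta N_t+1)}$ tend to $-\infty$, so that each $L_K(x_\pm)$ is dominated by its top-degree term $\frac{|x_\pm|^K}{K!}>0$; the same holds for $L_{K-1}^1(x_\pm)$, and hence $h(\gamma;K,\pm\beta)$ in \eqref{h_K} is positive for every $K$ once $\beta$ is large. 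Consequently the entire right-hand side of \eqref{OptimizationForDisplacement_0} is strictly positive, which forces the left-hand side $\frac{\gamma^*+\xi\beta}{\gamma^*-\xi\beta}$ to be positive as well; since $\gamma^*+\xi\beta>0$, this already pins down $\gamma^*>\xi\beta$, i.e.\ the optimal displacement always overshoots $\xi\beta$.

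Next I would solve \eqref{OptimizationForDisplacement_0} for the gap $\gamma^*-\xi\beta$, writing
\[
\gamma^*-\xi\beta=\frac{\gamma^*+\xi\beta}{\frac{p_0}{p_1}\,e^{\frac{4\tau\xi\beta\gamma^*}{N_t+1/\eta}}\,S(\beta,\gamma^*)},\qquad S(\beta,\gamma^*)\triangleq\frac{\sum_{K=0}^{K_{th}}\frac{(\eta N_t)^K}{(\eta N_t+1)^{K+1}}h(\gamma^*;K,-\beta)}{\sum_{K=0}^{K_{th}}\frac{(\eta N_t)^K}{(\eta N_t+1)^{K+1}}h(\gamma^*;K,\beta)},
\]
and then estimate the denominator. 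Because $\gamma^*>\xi\beta$, the exponential factor is bounded below by $e^{\frac{4\tau\xi^2\beta^2}{N_t+1/\eta}}$, which grows super-polynomially in $\beta$. For the ratio $S$ I would use the top-degree asymptotics above: both sums are dominated by their $K=K_{th}$ term, so $S\to(|x_-|/|x_+|)^{K_{th}}=(\langle n'\rangle_-/\langle n'\rangle_+)^{K_{th}}$, and a direct computation with $\langle n'\rangle_\pm$ from \eqref{n_pm} shows this ratio stays bounded below by a positive constant $S_{\min}$ whenever $\xi<1$. Thus the denominator diverges while the numerator grows at most linearly, so the displayed ratio tends to $0$; this is the asymptotic picture that yields $\gamma^*\to\xi\beta$.

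To make this rigorous despite $\gamma^*$ appearing on both sides, I would turn the displayed identity into a self-contained estimate that assumes nothing about the size of $\gamma^*$ beforehand. Setting $u\triangleq\gamma^*-\xi\beta\ge 0$ and $A\triangleq\frac{p_0}{p_1}e^{\frac{4\tau\xi^2\beta^2}{N_t+1/\eta}}S_{\min}$, the identity together with $\gamma^*+\xi\beta=u+2\xi\beta$ and the exponential lower bound gives $u\le (u+2\xi\beta)/A$, whence $u\le \frac{2\xi\beta}{A-1}\to0$ since $A\to\infty$. This closes the argument for $\xi<1$.

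The main obstacle I anticipate is the control of the ratio $S(\beta,\gamma^*)$: one must establish the top-degree asymptotics of $L_K$ and $L_{K-1}^1$ at large negative argument uniformly enough to both guarantee $h>0$ and bound $S$ from below over the whole admissible range $\gamma^*>\xi\beta$, not merely pointwise. The borderline case $\xi=1$ (where $\langle n'\rangle_-$ no longer grows and $S\to0$) will need a separate remark, namely that $S$ decays only polynomially in $\beta$ while the exponential factor grows like $e^{c\beta^2}$, so the denominator still diverges and the conclusion $\gamma^*\to\xi\beta$ persists.
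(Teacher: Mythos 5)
Your proof follows essentially the same route as the paper's: both rearrange the stationarity condition \eqref{OptimizationForDisplacement_0} into the form $\gamma^*-\xi\beta=(\cdots)\,e^{-4\tau\xi\beta\gamma^*/(N_t+1/\eta)}$ (the paper's eq.~\eqref{Gamma-Beta_2}) and let the exponential drive the right-hand side to zero as $\beta\to\infty$. The paper stops there; your additional steps --- the positivity of $h$ from the top-degree Laguerre asymptotics (hence $\gamma^*>\xi\beta$), the uniform lower bound on the ratio $S$, the self-consistent inequality $u\le(u+2\xi\beta)/A$ that resolves the circularity of $\gamma^*$ appearing in its own bound, and the separate remark for $\xi=1$ --- are precisely the details the paper's one-line argument leaves implicit, so your version is correct and strictly more careful.
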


\begin{proof}
We have shown that the optimum displacement satisfies \eqref{OptimizationForDisplacement_0}. Then we can rewrite \eqref{OptimizationForDisplacement_0} as
\begin{equation}
\label{Gamma-Beta_2}
\gamma^*-\xi\beta= e^{-\frac{4\tau\xi\beta\gamma^*}{N_t+1/\eta}}\frac{p_1(\gamma^*+\xi\beta)}{p_0} \frac{\sum_{K=0}^{K_{th}}\frac{(\eta N_t)^K}{(\eta N_t+1)^{K+1}}h(\gamma^*;K,\beta)}{\sum_{K=0}^{K_{th}}\frac{(\eta N_t)^K}{(\eta N_t+1)^{K+1}}h(\gamma^*;K,-\beta)}.
\end{equation}
\noindent When $\beta$ approaches $\infty$, the right side of \eqref{Gamma-Beta_2} approaches zero due to the presence of exponential term $e^{-\frac{4\tau\xi\beta\gamma^*}{N_t+1/\eta}}$. Therefore, we have $\gamma^* \to \xi\beta$.
\end{proof}

Theorem \ref{theorem 2} indicates that when the number of signal photons $N_s=|\beta|^2$ is large, the optimum displacement $\gamma^*$ degenerates to $\xi\beta$. Therefore, when the required precision is relatively low, we can let $\gamma^*=\xi\beta$ and optimize the threshold $K_{th}$ only. Besides, if $\xi$ approaches 1, then we have $\gamma^* \approx \beta$ when $\beta$ is large, thus the ODTD degenerates to the Kennedy receiver with threshold detection in \ref{KennedyReceiverWithThresholdDetection}.

Moreover, Theorem \ref{theorem 2} also implies that the optimum displacement is near to $\xi \beta$ when $\beta$ is large. Therefore, $(K_{th}^*(\xi\beta), \gamma^*(K_{th}^*(\xi\beta))$ can be a good initial point for $(K_{th},\gamma)$ of the heuristic greedy search algorithm.

\subsection{When $\gamma$ Approaches $\infty$}
Theorem \ref{theorem 2} presents a special case when the signal strength $N_s$ approaches $\infty$. In this subsection, we discuss another special case when LO strength approaches $\infty$.

The configuration given in Fig. \ref{Fig:Threshold_Detection} is similar to the one-port homodyne detection \cite{yuen1983noise,schumaker1984noise}. Especially, transmission rate of the beam splitters in both the ODTD and the one-port homodyne detection are high, i.e., $\tau \to 1$. Therefore, it is necessary to clarify the relation between the ODTD and the one-port homodyne detection.

The main difference between ODTD and one-port homodyne detection is the strength of the LO. The LO $\ket{\beta_{LO}}$ in one-port homodyne detection needs to be high enough, i.e., $\beta_{LO} \to \infty$. However, the LO in ODTD is designed as a specifical value $\ket{\beta_{LO}}=\ket{\sqrt{\frac{\tau}{1-\tau}}\gamma^*}$, where $\gamma^*$ is the optimized displacement. Because $\tau \to 1$, the strength of $\ket{\beta_{LO}}$ in ODTD is also high. Therefore, we can regard the one-port homodyne detection as a generalized Kennedy receiver with threshold detection and a displacement $\gamma=\sqrt{\frac{1-\tau}{\tau}}\beta_{LO}$. When $\gamma \to \infty$, we have $\beta_{LO} \to \infty$; then the generalized Kennedy receiver with threshold detection should degenerate to the one-port homodyne detection. In this case, the error probability of ODTD can be obtained using a similar approach derived in \cite{schumaker1984noise}, which results in the following Theorem \ref{theorem 3}.

\begin{theorem}\label{theorem 3}
When the displacement $\gamma$ approaches $\infty$, the threshold detection for discriminating displaced binary coherent states degenerates to the one-port homodyne detection, and the minimum error probability for \eqref{Pe_1} with an optimum threshold can be approximated by
\begin{equation}
\label{One-port_Homodyne_1}
\begin{aligned}
P_e= p_0 Q\left(\sqrt{\Lambda_1}-\frac{1}{2\sqrt{\Lambda_1}}\ln\frac{p_1}{p_0}\right)+
p_1Q\left(\sqrt{\Lambda_1}+\frac{1}{2\sqrt{\Lambda_1}}\ln\frac{p_1}{p_0}\right)
\end{aligned}
\end{equation}
\noindent where $\Lambda_1=\frac{4\tau \xi^2 N_s}{2N_t+1/\eta}$ is the signal-to-noise of one-port homodyne detection; and $Q(x)$ is the tail distribution function of the standard normal distribution.
\end{theorem}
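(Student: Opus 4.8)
The plan is to exploit the observation, already anticipated in the text, that the large-$\gamma$ limit turns the photon-counting distribution $P(K|\pm\beta,\gamma)$ into a pair of Gaussians, so that the integer threshold test of Theorem~\ref{theorem 1} reduces to a classical continuous Gaussian binary hypothesis test whose error probability is standard. Concretely, I would read $\hat{\rho}_{th}(\pm\beta,\gamma)$ off its $P$-representation in \eqref{Rho_th} as a coherent amplitude $\alpha$ drawn from a complex Gaussian centred at $\alpha_0^{\pm}=\sqrt{\tau}(\pm\beta+\gamma)$ with per-mode variance $N_t$, followed by a Poisson count of mean $\eta|\alpha|^2$ (plus the dark-count intensity $\nu$). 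The count $K$ is therefore a doubly stochastic (Cox) variable, and I would obtain its first two moments from the laws of total expectation and total variance, which is more transparent than manipulating the Laguerre form \eqref{P_K_1} directly.

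First I would establish the moments. The mean is $\mu_\pm=E[K|\pm\beta]=\eta(\langle n\rangle_\pm+N_t)$ up to the subleading dark-count term, and the law of total variance gives
\begin{equation}
\sigma_\pm^2=\eta\,E\big[|\alpha|^2\big]+\eta^2\,\mathrm{Var}\big(|\alpha|^2\big)=\eta\big(\langle n\rangle_\pm+N_t\big)+\eta^2\big(2\langle n\rangle_\pm N_t+N_t^2\big),
\end{equation}
where $\mathrm{Var}(|\alpha|^2)=2|\alpha_0^{\pm}|^2N_t+N_t^2$ is the standard variance of the squared modulus of a complex Gaussian and $|\alpha_0^{\pm}|^2=\langle n\rangle_\pm$. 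As $\gamma\to\infty$ the displacement energy $\langle n\rangle_\pm=\tau(\beta^2+\gamma^2\pm2\xi\beta\gamma)$ is dominated by $\tau\gamma^2$, so both variances collapse to the common leading value $\sigma^2\approx\eta\tau\gamma^2(1+2\eta N_t)$, while the two means separate by $\Delta\triangleq\mu_+-\mu_-=\eta(\langle n\rangle_+-\langle n\rangle_-)=4\eta\tau\xi\beta\gamma$. A central-limit / Laguerre-asymptotics argument then justifies replacing each $P(K|\pm\beta,\gamma)$ by $\mathcal{N}(\mu_\pm,\sigma^2)$ in this limit.

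With equal variances and $\mu_-<\mu_+$, the threshold test in \eqref{Threshold test} is the Gaussian likelihood-ratio test, so I would solve for the optimum (now essentially continuous) threshold $K_{th}^*=\tfrac12(\mu_++\mu_-)-\tfrac{\sigma^2}{\Delta}\ln\tfrac{p_1}{p_0}$ and substitute it into $P_e=p_0\,Q\!\big(\tfrac{K_{th}-\mu_-}{\sigma}\big)+p_1\,Q\!\big(\tfrac{\mu_+-K_{th}}{\sigma}\big)$. Writing $\sqrt{\Lambda_1}=\Delta/(2\sigma)$ turns the two arguments into $\sqrt{\Lambda_1}\mp\tfrac{1}{2\sqrt{\Lambda_1}}\ln\tfrac{p_1}{p_0}$ and reproduces \eqref{One-port_Homodyne_1} exactly; a direct substitution of $\Delta$ and $\sigma^2$ then gives $\Lambda_1=\Delta^2/(4\sigma^2)=\tfrac{4\eta\tau\xi^2\beta^2}{1+2\eta N_t}=\tfrac{4\tau\xi^2N_s}{2N_t+1/\eta}$, matching the claimed signal-to-noise ratio since $N_s=\beta^2$.

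The main obstacle is the limiting step itself, i.e.\ rigorously justifying the Gaussian approximation of the Laguerre-type distribution \eqref{P_K_1} and controlling its error. I would handle this either through the uniform large-parameter asymptotics of $L_K$ with argument growing like $\gamma^2$, or more cleanly through a Lyapunov central-limit estimate for the Cox count, noting that because $\sigma\propto\gamma\to\infty$ the unit spacing of the integer lattice is negligible on the scale of $\sigma$; this is precisely what lets the discrete optimum threshold of Theorem~\ref{theorem 1} converge to the continuous Gaussian-optimal one and makes the substitution above legitimate. A secondary point to verify is that the subleading constants ($N_t$, $\nu$, and the $\beta^2$ contribution to $\langle n\rangle_\pm$) drop out of $\Lambda_1$ at leading order, which the computation above already confirms.
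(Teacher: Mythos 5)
Your proposal is correct and reproduces the paper's result exactly: the mean separation $\Delta=\mu_+-\mu_-=4\eta\tau\xi\beta\gamma$, the common leading-order variance $\eta^2\tau\gamma^2(2N_t+1/\eta)$, the Gaussian-optimal threshold $\tfrac{1}{2}(\mu_++\mu_-)-\tfrac{\sigma^2}{\Delta}\ln\tfrac{p_1}{p_0}$ (which is \eqref{Optimal_threshold_for_Gaussian_approximation}), and $\Lambda_1=\Delta^2/(4\sigma^2)=\tfrac{4\tau\xi^2N_s}{2N_t+1/\eta}$. The overall skeleton --- Gaussianize the two count distributions, take the Gaussian likelihood-ratio threshold, and write the two error terms as $Q$-functions --- is the same as the paper's. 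Where you differ is in how the moments are obtained: the paper quotes the large-$\gamma$ mean and variance for ideal detection from the homodyne-detection literature and then folds in the quantum efficiency via the beam-splitter vacuum-admixture identity $\left\langle\Delta n_d'^2\right\rangle=\eta^2\left\langle\Delta n_d^2\right\rangle+\eta(1-\eta)\left\langle n_d\right\rangle$, whereas you derive both moments from first principles by reading the $P$-representation \eqref{Rho_th} as a Gaussian-mixed (Cox/Mandel) Poisson count with intensity $\eta|\alpha|^2$ and applying the laws of total expectation and variance; the two computations agree term by term, and yours is more self-contained. You also explicitly flag the one step the paper leaves implicit --- justifying the Gaussian limit of the Laguerre-type distribution \eqref{P_K_1} and the passage from the integer-valued optimum threshold of Theorem \ref{theorem 1} to the continuous Gaussian-optimal one --- and your observation that $\sigma\propto\gamma\to\infty$ makes the unit lattice spacing negligible is precisely the right reason this passage is legitimate, whereas the paper simply asserts the approximation.
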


\begin{proof}
We first consider the case of perfect detection with $\eta=1$, and denote the number of detected photons by $n_d(\pm \beta)$ for input state $\ket{\pm \beta}$. When $\gamma$ is large, the expectation of $n_d(\pm \beta)$ is $\left \langle n_d \right \rangle_{\pm}\cong \tau(\gamma^2\pm 2\xi \beta \gamma)+N_t$, and the variance of $n_d(\pm \beta)$ can be obtained as $\left \langle {\Delta n_d}^2 \right \rangle_{\pm} \cong(2N_t+1)\tau \gamma^2$ \cite{cariolaro2016quantum}.

Then we consider the imperfection of the quantum detection, i.e., $\eta < 1$, and denote the number of detected photons by $n_d'(\pm \beta)$. The imperfect detection with quantum efficiency $\eta$ is equivalent to a perfect detection after the state combined with a vacuum state using a beam splitter with transmission rate $\eta$ \cite{schumaker1984noise}. Then the expectation of $n_d'(\pm \beta)$ is $\left \langle n_d' \right \rangle_{\pm}=\eta \left \langle n_d \right \rangle_{\pm}$. The variance of $n_d'(\pm \beta)$ can be obtained as \cite{schumaker1984noise}
\begin{equation}
\label{Variance_1}
\begin{aligned}
\left \langle {\Delta n_d'}^2 \right \rangle_{\pm}&=\eta^2 \left \langle {\Delta n_d}^2 \right \rangle_{\pm} +\eta(1-\eta) \left \langle n_d \right \rangle_{\pm}\\
&\cong \eta^2\tau\gamma^2(2N_t+1/\eta).
\end{aligned}
\end{equation}

When $\gamma$ approaches $\infty$, the number of detected photons $n_d'(\pm \beta)$ for input state $\ket{\pm \beta}$ can be approximated by a Gaussian distribution with mean $\left \langle n_d' \right \rangle_{\pm}$ and variance $\left \langle {\Delta n_d'}^2 \right \rangle_{\pm}$. Using MAP detection, the optimum threshold can be obtained as
\begin{equation}
\label{Optimal_threshold_for_Gaussian_approximation}
\begin{aligned}
n_{th}=\eta \tau \gamma^2+\eta N_t -\eta\gamma\frac{2N_t+1/\eta}{4\xi \beta} \ln\frac{p_1}{p_0}.
\end{aligned}
\end{equation}

Then the error probability of distinguishing state $\ket{-\beta}$ and $\ket{\beta}$ can be approximated by
\begin{equation}
\label{One-port_Homodyne_2}
\begin{aligned}
P_e&=p_0 Q\left(\sqrt{\frac{(n_{th}-\left \langle n_d' \right \rangle_{-})^2}{\left\langle {\Delta n_d'}^2 \right\rangle_{\pm}}}\right)
+p_1 Q\left(\sqrt{\frac{(n_{th}-\left \langle n_d' \right \rangle_{+})^2}{\left\langle {\Delta n_d'}^2 \right\rangle_{\pm}}}\right)
\\
&= p_0 Q\left(\sqrt{\frac{(2\sqrt{\tau}\xi\beta-\frac{2N_t+1/\eta}{4\sqrt{\tau}\xi\beta}\ln\frac{p_1}{p_0})^2}{2 N_t +1/\eta}}\right)+ p_1 Q\left(\sqrt{\frac{(2\sqrt{\tau}\xi\beta+\frac{2N_t+1/\eta}{4\sqrt{\tau}\xi\beta}\ln\frac{p_1}{p_0})^2}{2 N_t +1/\eta}}\right).
\end{aligned}
\end{equation}

By letting $\Lambda_1=\frac{4\tau \xi^2 N_s}{2N_t+1/\eta}$, eq. \eqref{One-port_Homodyne_2} can be rewritten as \eqref{One-port_Homodyne_1}.
\end{proof}

Because the one-port homodyne detection is vulnerable to the extra noise of the LO, Yuen and Chan proposed the two-port homodyne detection \cite{yuen1983noise}. The signal-to-noise of two-port homodyne detection under thermal noise with imperfect detection can be obtained as $\Lambda_2=\frac{4N_s}{2N_t+1/\eta}$ \cite{yuen1983noise,schumaker1984noise}. Then the error probability for two-port homodyne detection is obtained by replacing $\Lambda_1$ with $\Lambda_2$ in \eqref{One-port_Homodyne_1}. Specially, when equal prior probabilities are considered, the error probability is given by
\begin{equation}
\label{Two-port_homodyne}
\begin{aligned}
P_e=Q\left(\sqrt{\frac{4N_s}{2N_t+1/\eta}}\right).
\end{aligned}
\end{equation}

Due to its robust performance against the extra noise of LO, the two-port homodyne detection is widely used in both classical and quantum communication systems. Therefore, we use the two-port homodyne detection as our reference scheme for performance comparison in Section \ref{sect:NumericalResults}.

\section{Numerical Results}
\label{sect:NumericalResults}

Unless otherwise specified, the parameters in this Section are set as follows: $p_0=p_1=0.5$, $\tau=0.99$, $\xi=0.998$, $N_t=0.01$, $\nu=0.001$, $\eta=0.72$.

\begin{figure}
\begin{center}
\subfigure[]{\includegraphics[width=0.6\textwidth]{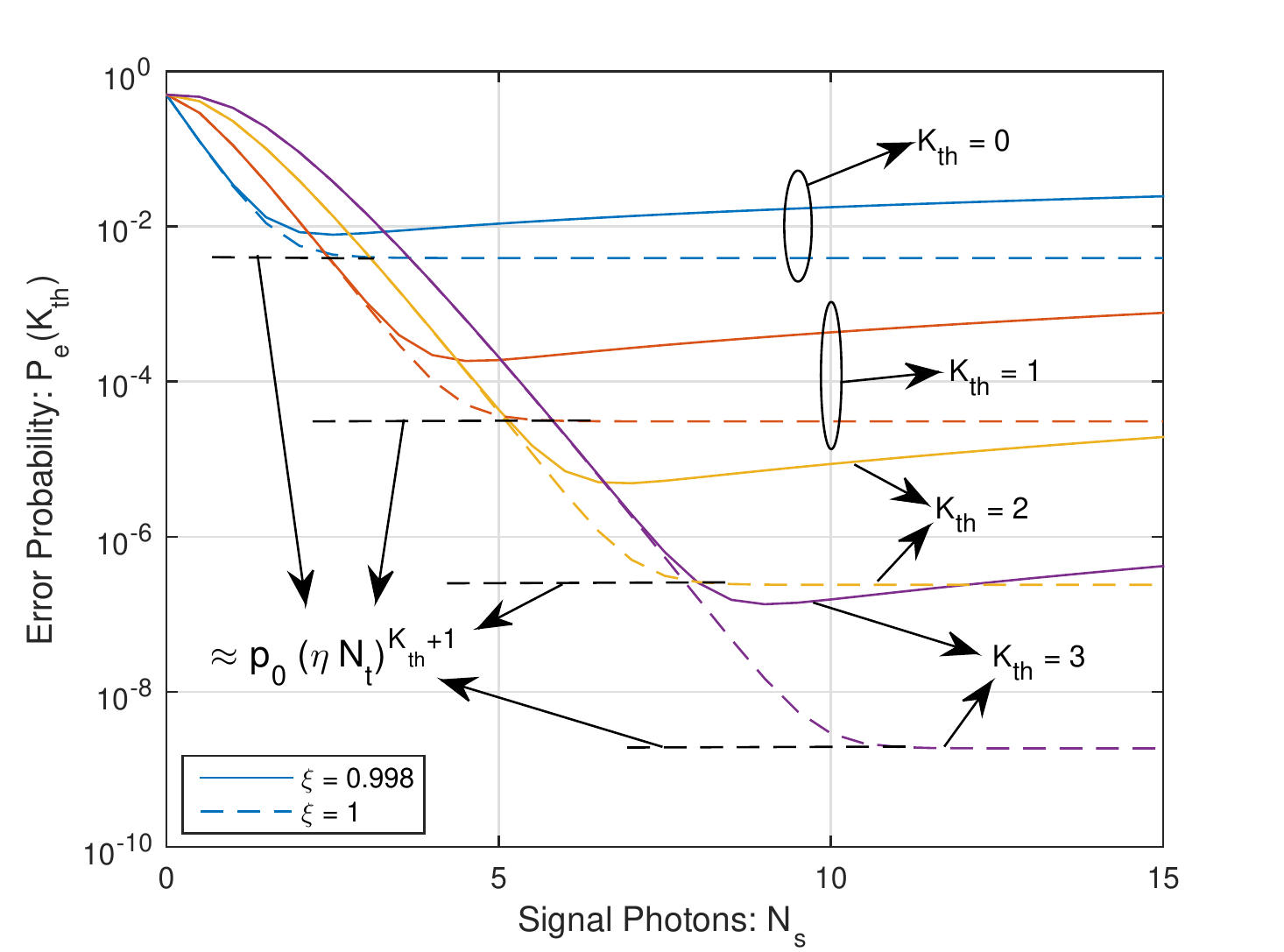}
\label{Pe_ve_Ns_different_K_xi}}
\subfigure[]{\includegraphics[width=0.6\textwidth]{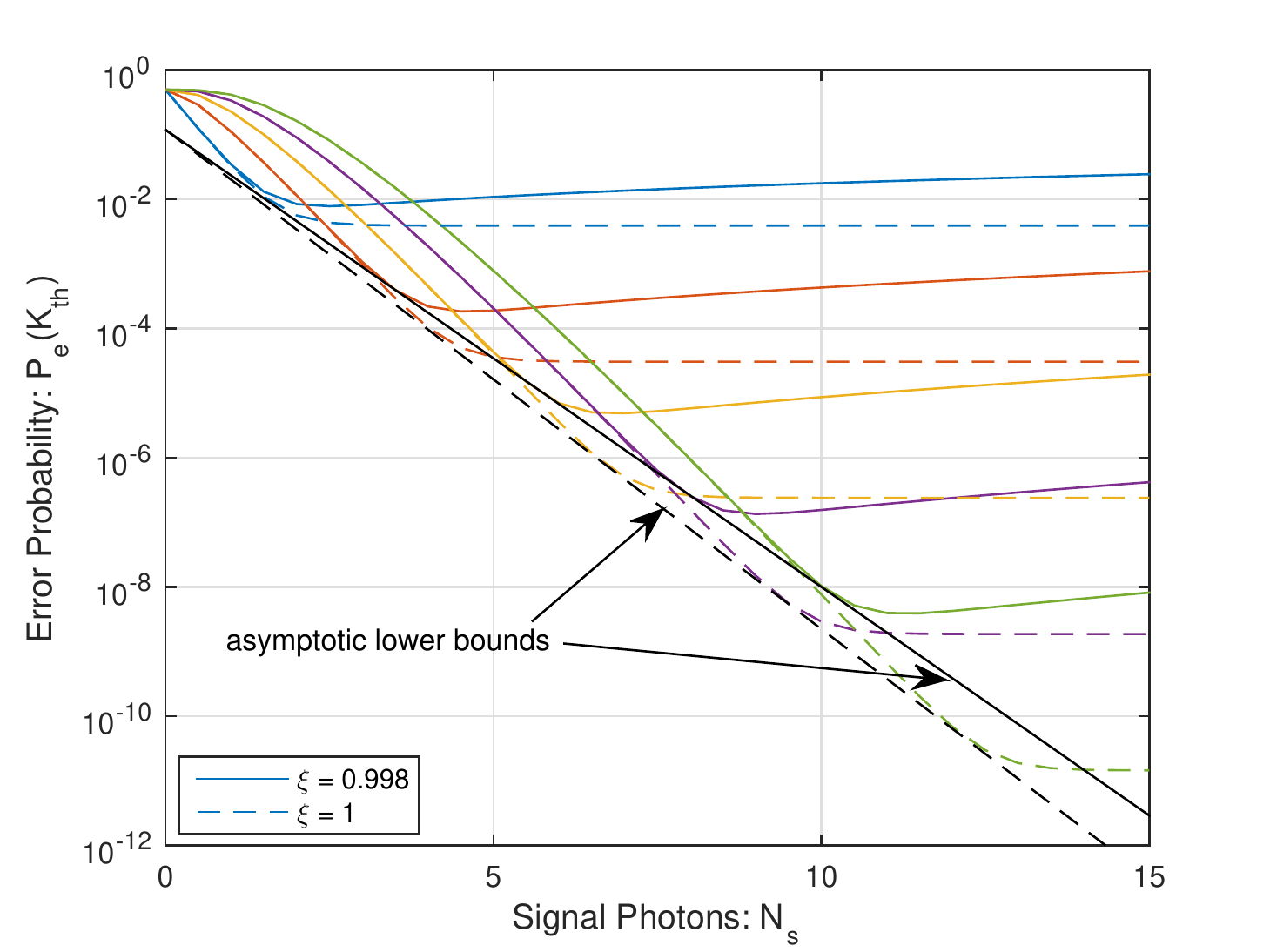}
\label{Pe_ve_Ns_different_K_asym}}
\caption{Error probabilities $P_e(K_{th})$ versus signal photons $N_s$: (a) different interference visibility $\xi$; (b) asymptotic lower bounds $P_e \approx 0.12 e^{-1.63 N_s}$ for $\xi=0.998$ in solid line and $P_e \approx 0.12 e^{-1.78 N_s}$ for $\xi=1$ in dash line}
\label{Pe_ve_Ns_different_K}
\end{center}
\end{figure}

\subsection{Properties Of TD and ODTD}

We first consider the case when $\gamma=\beta$, i.e., the generalized Kennedy receiver with threshold detection (TD) \cite{yuan2020kennedy}. The error probabilities for different threshold $K_{th}$ is shown in Fig. \ref{Pe_ve_Ns_different_K}. From Fig. \ref{Pe_ve_Ns_different_K_xi}, we can see that, for $\xi =0.998$, the error probability for a given threshold first decreases then increases when the signal power approaches infinity; while for $\xi=1$, there exist a flat lower bound around $p_0(\eta N_t)^{K_{th}+1}$ for a given threshold $K_{th}$ when the signal power approaches infinity. This indicates that the imperfect interference visibility has a great impact on the discrimination performance when the signal power is large. Besides, we can also see that the difference of error probability for a given threshold between $\xi=0.998$ and $\xi=1$ increases as the threshold $K_{th}$ increases. This indicates that the imperfect interference visibility has a great impact on the discrimination performance when the threshold is large. As we have mentioned in Section \ref{lowerbounds}, because the threshold can be optimized for a given $N_s$, the minimum error probability is the envelope of a set of curves $\{P_{e}(K_{th}), K_{th}=0,1,2,\cdots\}$ shown in Fig. \ref{Pe_ve_Ns_different_K_asym}. Then a practical lower bound should be irrelevant to the threshold $K_{th}$. In Fig. \ref{Pe_ve_Ns_different_K_asym}, we use a curve-fitting method to plot the asymptotic lower bounds $P_e \approx 0.12 e^{-1.63 N_s}$ for $\xi=0.998$ in solid line and $P_e \approx 0.12 e^{-1.78 N_s}$ for $\xi=1$ in dash line.

\begin{figure}
\begin{center}
\includegraphics[width=0.6\textwidth, draft=false]{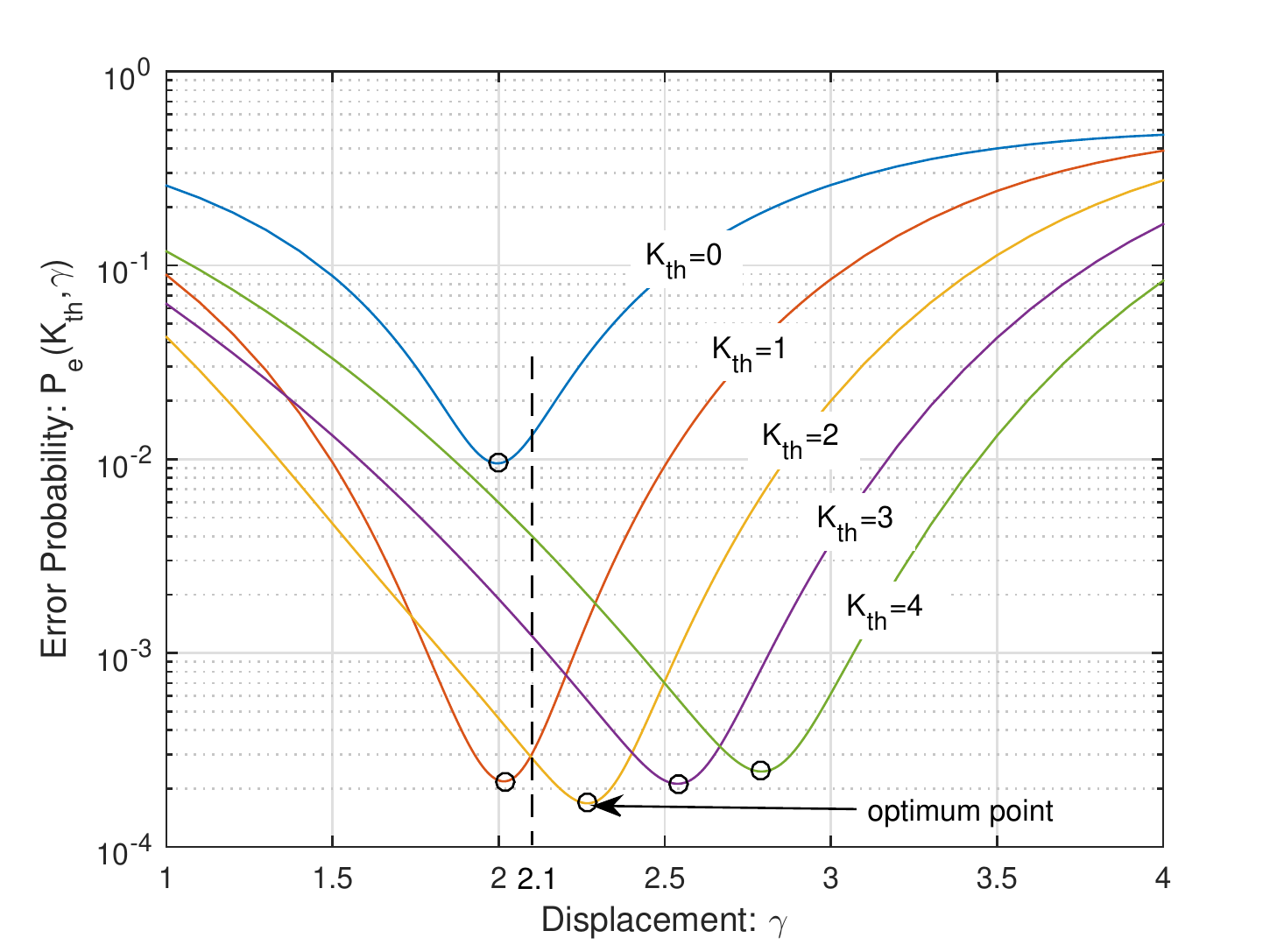}
\caption{Error probabilities $P_e(K_{th},\gamma)$ versus displacement $\gamma$ under different threshold $K_{th}$ when $\beta=2$}
\vspace{-0.4cm}
\label{Pe_ve_gamma}
\end{center}
\end{figure}

Then we consider the case when $\gamma \neq \beta $, i.e., the generalized Kennedy receiver with ODTD. Figure \ref{Pe_ve_gamma} presents the error probabilities under different threshold values $K_{th}={0, 1, 2, 3, 4}$ when the displacement $\gamma$ varies from 1 to 4 with $\beta=2$. We can see that the minimum error probability occurs when the threshold $K_{th}=2$ and the displacement $\gamma=2.25$. Suppose we use the coordinate descent algorithm to search the minimum error probability. When the displacement $\gamma$ falls into $(0,2.1)$, following the coordinate descent, we can obtain the optimum threshold in this case as $K_{th}=1$. Then we optimize the displacement for $K_{th}=1$ and obtain $\gamma=2.02$. Because this displacement still falls in $(0,2.1)$, the coordinate descent will get stuck at this locally optimal point $(K_{th}=2,\gamma=2.02)$. Therefore, the coordinate descent can fail in searching the minimum error probability. On the other hand, the minimum error probability for a given threshold, i.e., $P_e(K_{th}, \gamma^*(K_{th}))$, decreases when the threshold $K_{th}$ increases from 0 to 2 and increases when $K_{th}$ increases from 2 to 4. This indicates that $P_e(K_{th}, \gamma^*(K_{th}))$ is a discrete convex function of $K_{th}$. Therefore, we can use the heuristic greedy search algorithm proposed in Section \ref{algorithm} to search the minimum error probability.

\begin{figure}
\begin{center}
\subfigure[]{\includegraphics[width=0.6\textwidth]{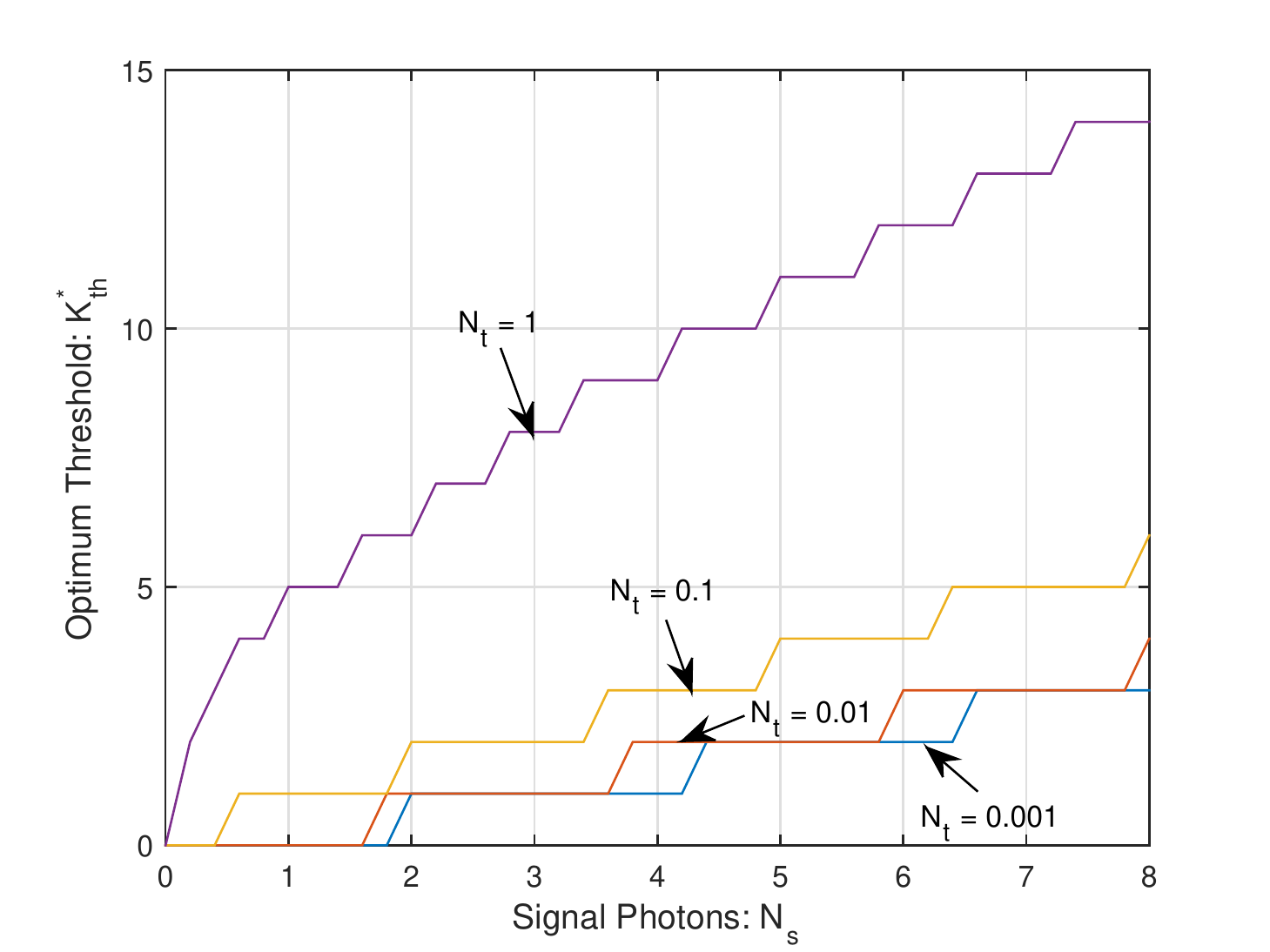}
\label{Optimal threshold}}
\subfigure[]{\includegraphics[width=0.6\textwidth]{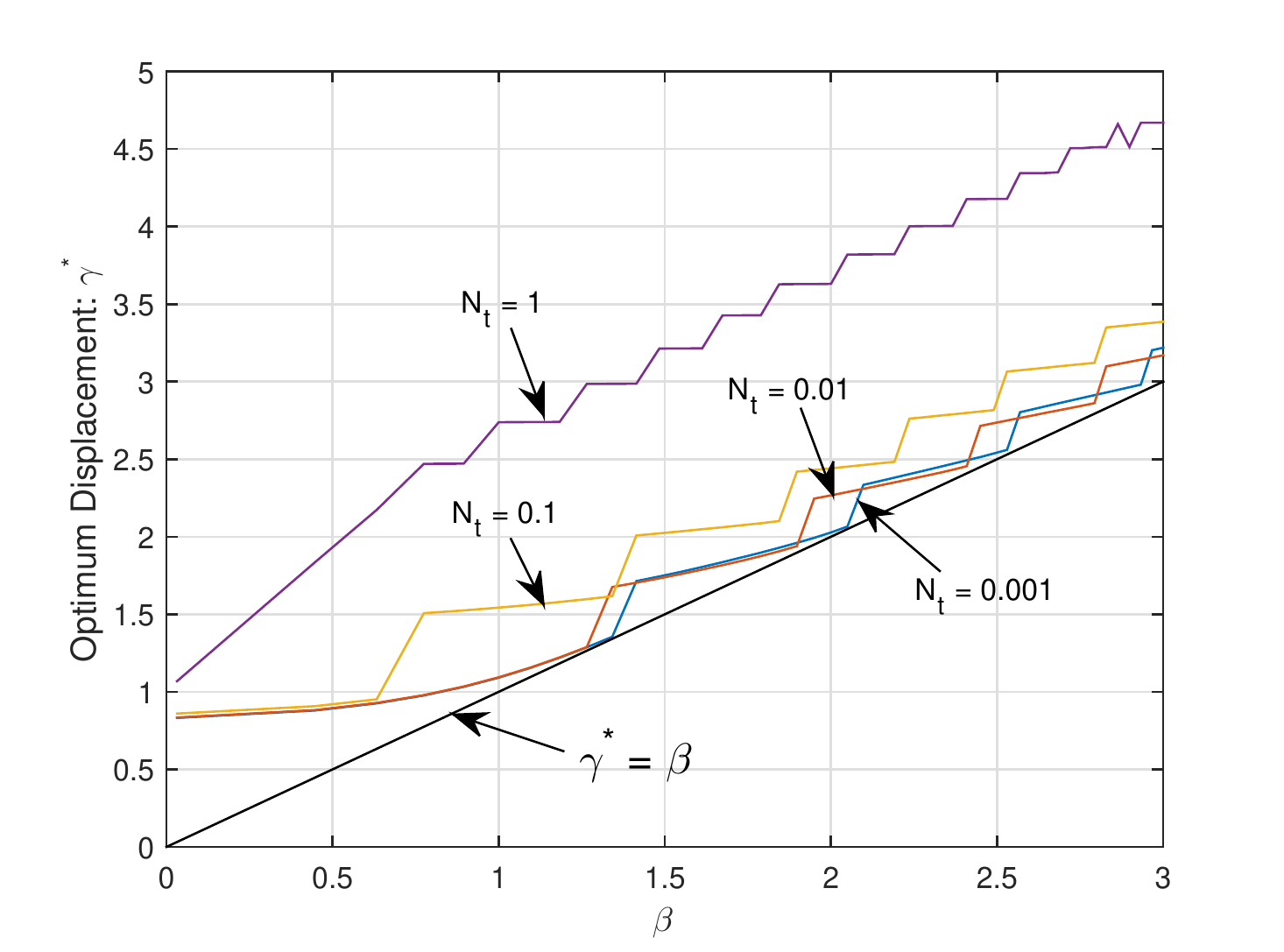}
\label{optimal displacement}}
\caption{Optimum threshold and optimum displacement for ODTD under different thermal photons (a) optimum threshold $K_{th}^*$; (b) optimum displacement $\gamma^*$}
\end{center}
\end{figure}

The obtained optimum threshold and optimum displacement for ODTD using the heuristic greedy search algorithm are shown in Fig. \ref{Optimal threshold} and Fig. \ref{optimal displacement}, respectively. From Fig. \ref{Optimal threshold} we can see that the optimum threshold $K_{th}^*$ increases as either the signal photons $N_s$ or thermal photons $N_t$ increases. From Fig. \ref{optimal displacement} we can see that the optimum threshold $\gamma^*$ increases as the signal $\beta$ increases, and the optimum threshold $\gamma^*$ is always greater than the signal $\beta$. Besides, the gap between $\gamma^*$ and $\beta$ increases as the thermal photons $N_t$ increases.

\begin{figure}
\begin{center}
\includegraphics[width=0.6\textwidth, draft=false]{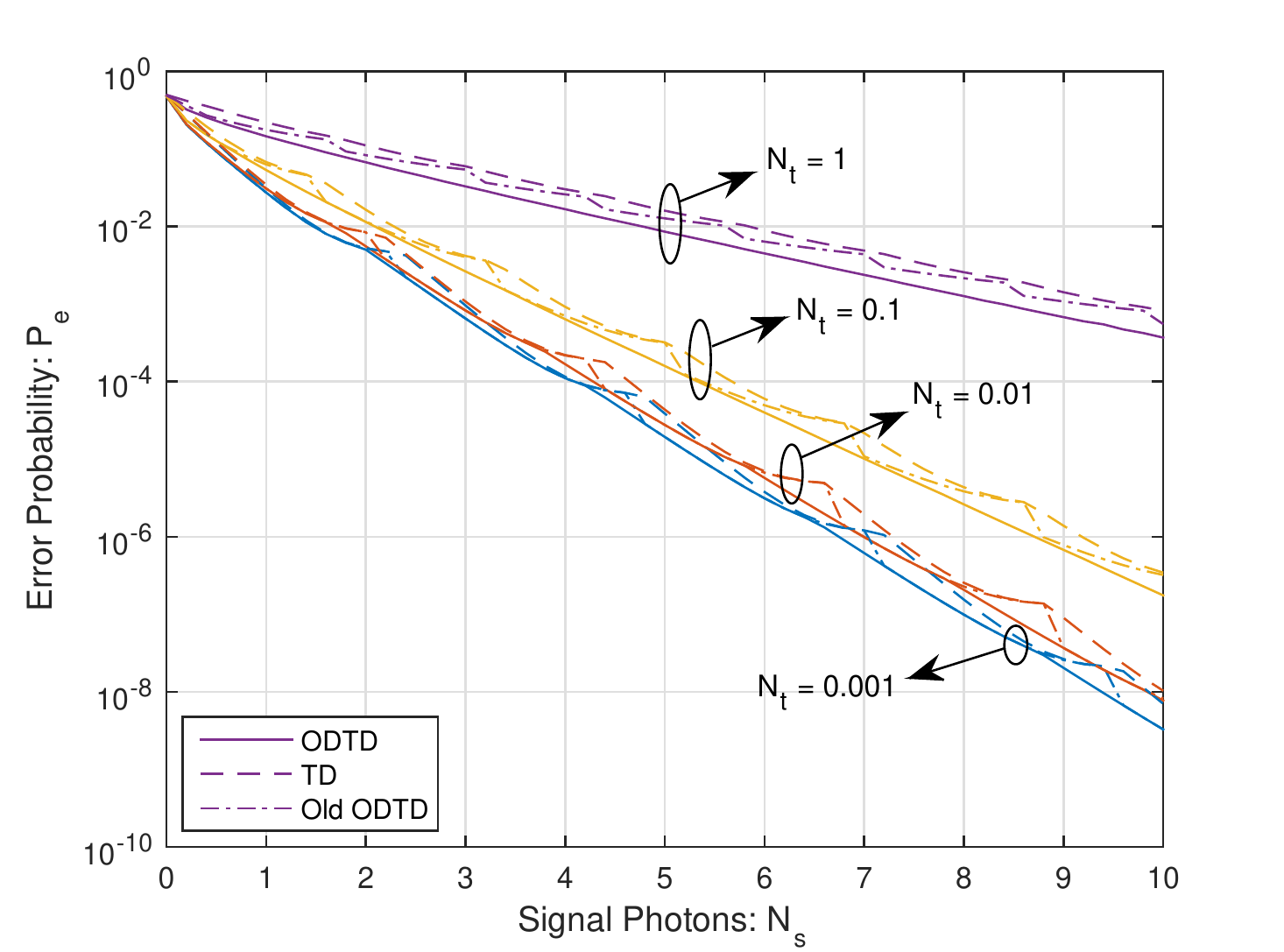}
\caption{Error probabilities $P_e$ versus signal photons $N_s$ under different thermal photons $N_t$}
\vspace{-0.4cm}
\label{ODTD_vs_TD_different_Nt}
\end{center}
\end{figure}

The minimum error probability comparison between the generalized Kennedy receiver with TD ($\gamma=\beta$) and ODTD ($\gamma \neq \beta$) under different thermal photons is shown in Fig. \ref{ODTD_vs_TD_different_Nt}. We also plot the error probabilities (old ODTD) obtained by our previous work \cite{yuan2020kennedy}, where the displacement is optimized after the threshold optimization. We can see that the error probabilities of ODTD are lower than those of both TD and old ODTD. Besides, the error probability curves of ODTD are much smoother than those of TD and old ODTD, especially when $N_t$ is small. Because the error probabilities of ODTD are always lower than those of TD and old ODTD, we will focus on the performance comparison between the ODTD and the homodyne detection in the following.

\subsection{Comparison Between ODTD and Homodyne Detection}

\begin{figure}
\begin{center}
\subfigure[]{\includegraphics[width=0.6\textwidth]{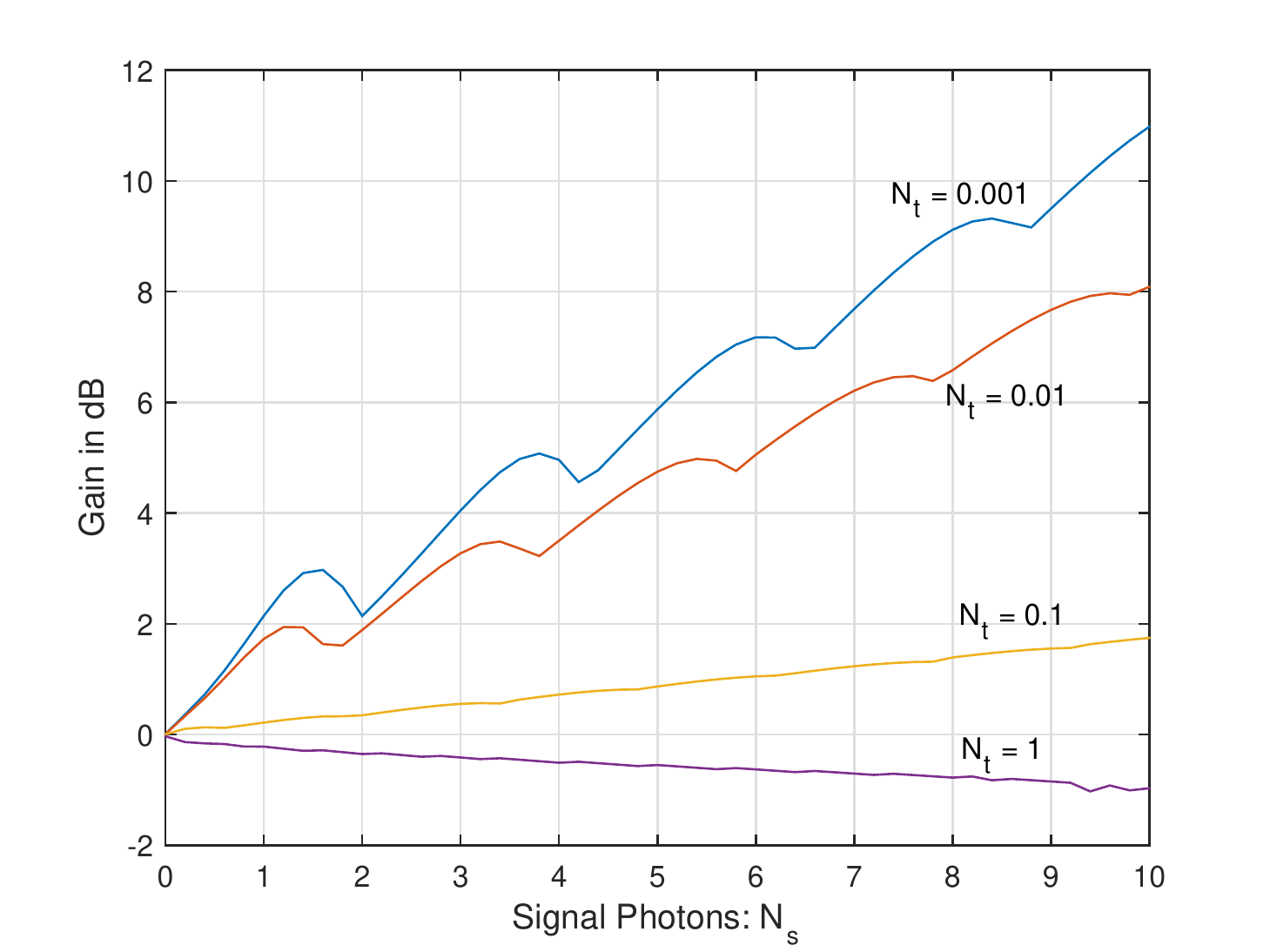}
\label{Gain_in_different_Nt}}
\subfigure[]{\includegraphics[width=0.6\textwidth]{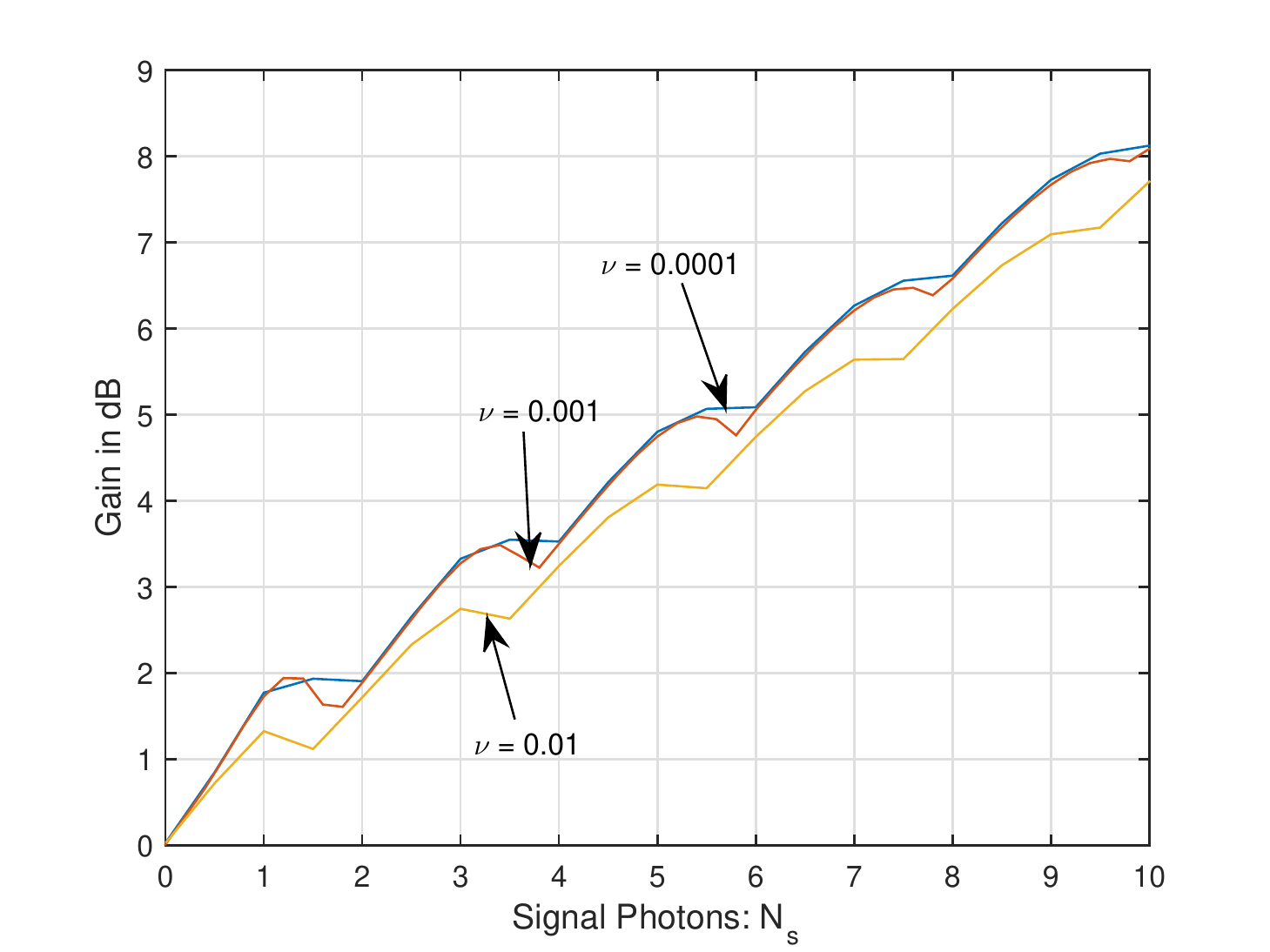}
\label{Gain_in_different_nu}}
\caption{Gain in dB of ODTD over homodyne detection (a) different thermal photons $N_t$; (b) different dark counts $\nu$}
\label{InfluencesOfNoises}
\end{center}
\end{figure}

For performance comparison, we define the performance gain in dB of ODTD over homodyne detection as $10\times \lg \frac{P_{e,homodyne}}{P_{e,ODTD}}$. Figure \ref{InfluencesOfNoises} shows the influences of noises on the performance gain, where Fig. \ref{Gain_in_different_Nt} and Fig. \ref{Gain_in_different_nu} show the gains under different thermal photons and different dark counts, respectively. The thermal photons can vary in a large range due to the variation of working temperature or potential Gaussian attacks on the channel, while the dark counts are much stabler and smaller than the thermal photons during the working period. Therefore, we set the range of thermal photons as from $0.001$ to $1$ and the range of dark counts as from $0.0001$ to $0.01$. Comparing Fig. \ref{Gain_in_different_Nt} and Fig. \ref{Gain_in_different_nu}, we can see that the influence of the thermal photons is much greater than that of the dark counts, which is as expected. Besides, from Fig. \ref{Gain_in_different_Nt}, we can see that the performance of ODTD can surpass the SQL given by the homodyne detection when $N_t \leq 0.1$.

\begin{figure}
\begin{center}
\subfigure[]{\includegraphics[width=0.6\textwidth]{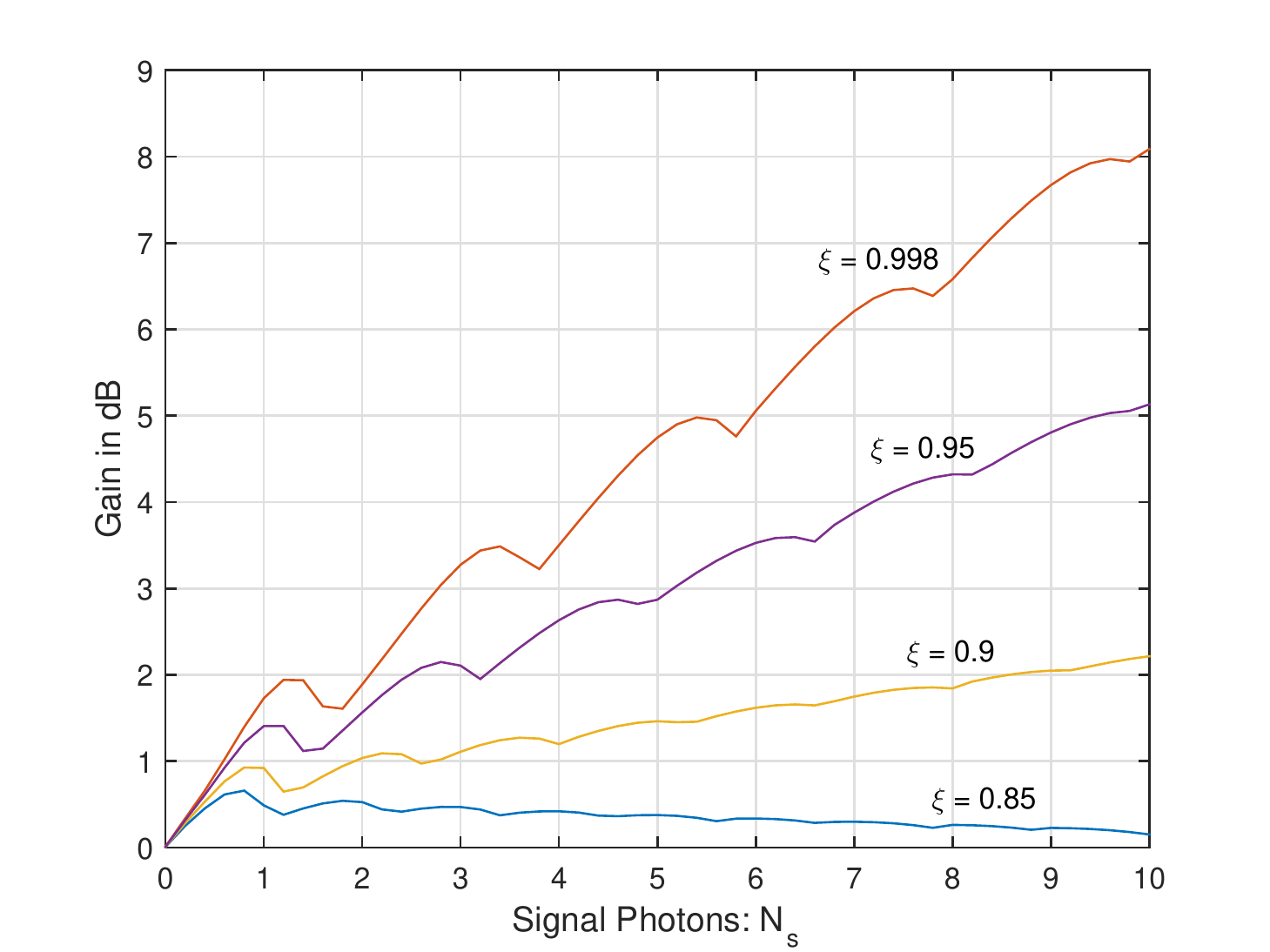}
\label{Gain_in_different_xi}}
\subfigure[]{\includegraphics[width=0.6\textwidth]{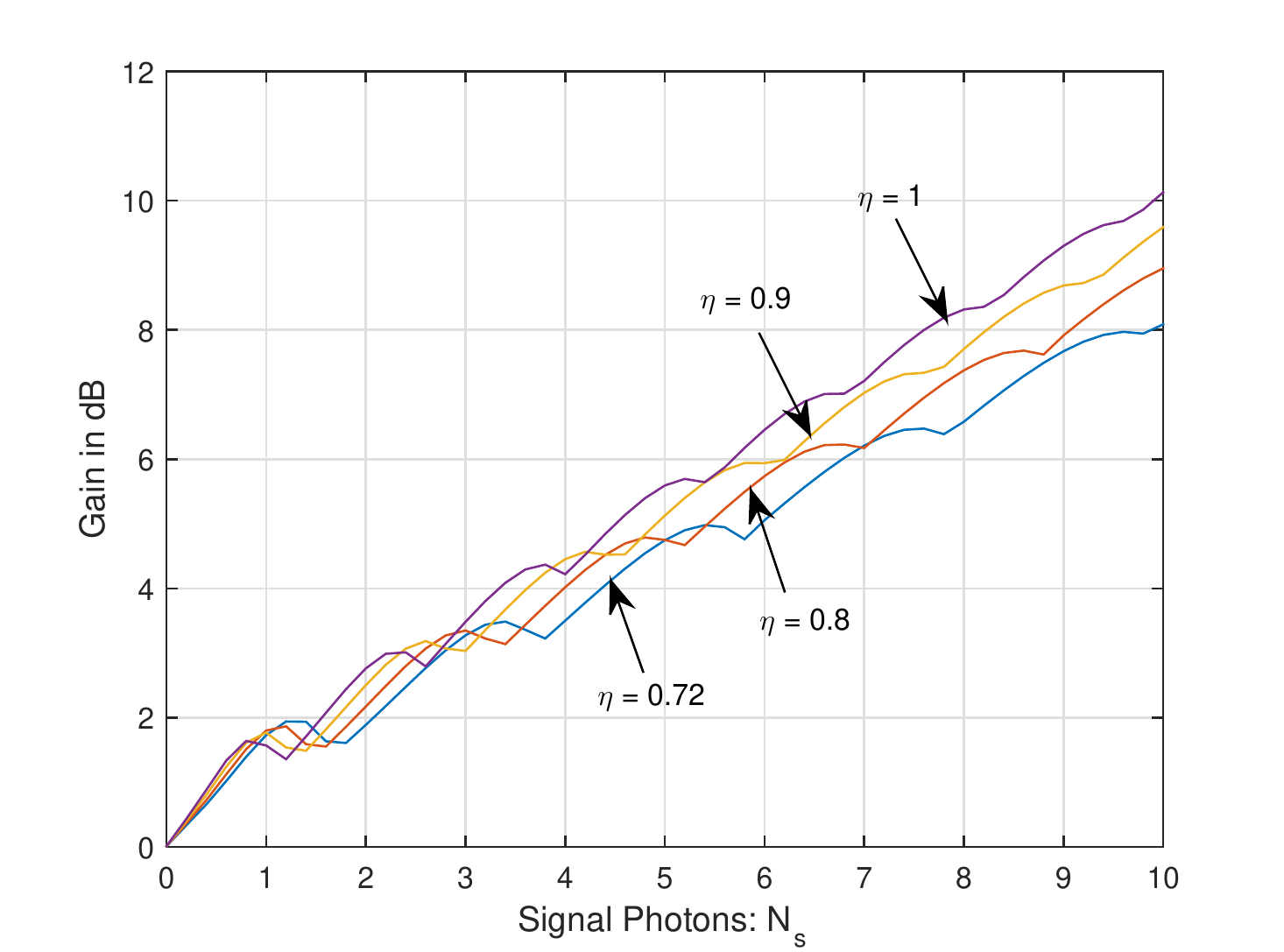}
\label{Gain_in_different_eta}}
\caption{Gain in dB of ODTD over homodyne detection (a) different interference visibility $\xi$; (b) different quantum efficiency $\eta$}
\label{InfluencesOfImperfections}
\end{center}
\end{figure}

Figure \ref{InfluencesOfImperfections} shows the influences of device imperfections on the performance gain, where Fig. \ref{Gain_in_different_xi} and Fig. \ref{Gain_in_different_eta} show the gains under different interference visibility and different quantum efficiency, respectively. From Fig. \ref{Gain_in_different_xi}, we can see that the performance of ODTD can surpass the SQL even if the interference visibility $\xi=0.85$ when $N_s \leq 10$. A lower interference visibility than $0.85$ can destroy the advantage of ODTD, especially in large signal powers. Comparing Fig. \ref{Gain_in_different_xi} and Fig. \ref{Gain_in_different_eta}, we can see that the influence of the imperfect interference visibility is much greater than that of the quantum efficiency. This is because the effect of a quantum efficiency $\eta$ is equivalent to a beam splitter with transmission rate $\eta$, and therefore it can be regarded as signal power degradation for both ODTD and homodyne detection.

\begin{figure}
\begin{center}
\includegraphics[width=0.6\textwidth, draft=false]{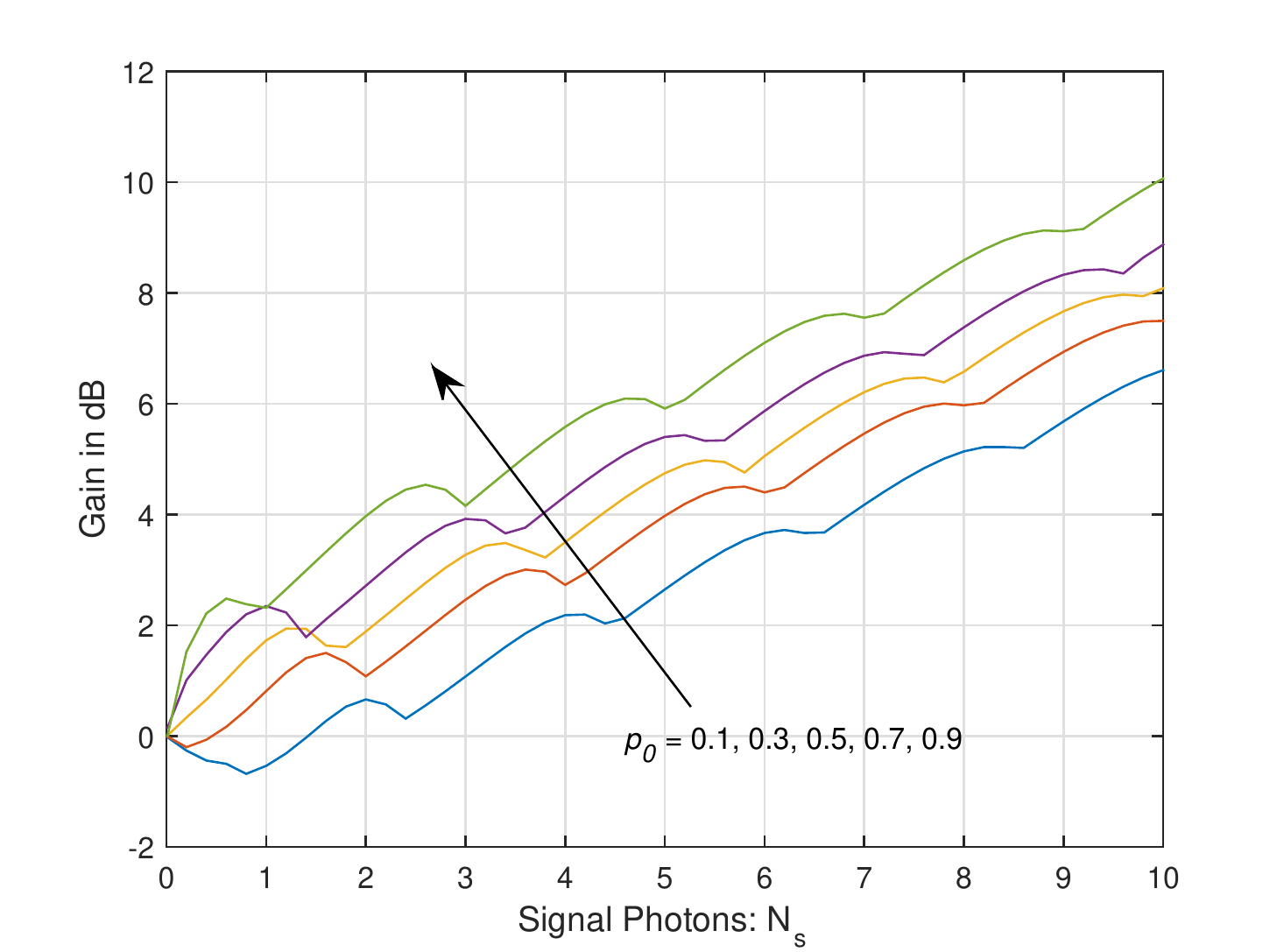}
\caption{Gain in dB of ODTD over homodyne detection in different $p_0$}
\vspace{-0.4cm}
\label{Gain_in_different_p_0}
\end{center}
\end{figure}

At last, we check the performance gain of ODTD over homodyne detection in different prior probabilities. As shown in Fig. \ref{Gain_in_different_p_0}, the gain increases as the prior probability $p_0$ increases. This indicates that the ODTD has more advantage over homodyne detection for a large $p_0$ compared with a small $p_0$.

\section{Conclusion}
\label{sect:Conclusion}

We analytically studied the generalized Kennedy receiver with ODTD in a realistic situation in the presence of both thermal noise and dark count noise using imperfect devices. We first proved that the MAP detection for generalized Kennedy receiver in this realistic situation is equivalent to a threshold detection. Then we analyzed the properties of the optimum threshold and the optimum displacement for ODTD and proposed a heuristic greedy search algorithm to obtain them. We proved that the ODTD degenerates to the Kennedy receiver with threshold detection when the signal power is large. We also clarified the connection between the generalized Kennedy receiver with threshold detection and the one-port homodyne detection. We found that the proposed  heuristic greedy search algorithm can obtain a lower and smoother error probability than that of previous works.

\bibliographystyle{IEEEtran}
\bibliography{refabrv}

\appendices
\section{Proof of the decreasing property for $f(x)$}\label{Appendix A}

\begin{lemma}\label{lemma_1}
When $x\leq 0$ and $m>n$, the inequality $L_{m-1}^{\alpha}(x)L_{n}^{\alpha}(x)>L_m^{\alpha}(x)L_{n-1}^{\alpha}(x)$ always holds, where $\alpha \in \mathbb{R}$ and $\mathbb{R}$ is the set of real numbers.
\end{lemma}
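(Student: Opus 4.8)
The plan is to reduce the two-index determinant inequality to a single-index Turán-type inequality and then settle the latter by induction through the three-term recurrence. First I would record positivity: for $x\le 0$ and $\alpha>-1$ the expansion $L_k^{\alpha}(x)=\sum_{i=0}^{k}\binom{k+\alpha}{k-i}\frac{(-x)^i}{i!}$ has only nonnegative terms together with the strictly positive constant term $\binom{k+\alpha}{k}$, so $L_k^{\alpha}(x)>0$ for every $k\ge 0$ (this covers the orders $\alpha\in\{0,1\}$ actually used in the paper). Dividing the claimed inequality by the positive quantity $L_{m-1}^{\alpha}(x)L_{n-1}^{\alpha}(x)$, the statement is equivalent to $r_n>r_m$, where $r_k\triangleq L_k^{\alpha}(x)/L_{k-1}^{\alpha}(x)$; that is, $r_k$ should be strictly decreasing in $k$. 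Since each consecutive difference satisfies $r_k-r_{k+1}=T_k/(L_{k-1}^{\alpha}L_k^{\alpha})$ with $T_k\triangleq [L_k^{\alpha}(x)]^2-L_{k-1}^{\alpha}(x)L_{k+1}^{\alpha}(x)$, telescoping gives
\begin{equation}
L_{m-1}^{\alpha}(x)L_n^{\alpha}(x)-L_m^{\alpha}(x)L_{n-1}^{\alpha}(x)=L_{m-1}^{\alpha}(x)L_{n-1}^{\alpha}(x)\sum_{k=n}^{m-1}\frac{T_k}{L_{k-1}^{\alpha}(x)L_k^{\alpha}(x)},
\end{equation}
so the whole lemma follows once I show $T_k>0$ for all $k\ge 1$.

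To prove $T_k>0$ I would induct on $k$ using the recurrence $(k+1)L_{k+1}^{\alpha}=(2k+1+\alpha-x)L_k^{\alpha}-(k+\alpha)L_{k-1}^{\alpha}$. Applying it once to eliminate $L_{k+1}^{\alpha}$ from $T_k$ and once (with $k$ shifted to $k-1$) to eliminate $L_{k-2}^{\alpha}$ from $T_{k-1}$, both $(k+1)T_k$ and $(k-1+\alpha)T_{k-1}$ become the same quadratic form in $L_{k-1}^{\alpha},L_k^{\alpha}$ apart from their leading coefficients; subtracting them collapses the cross terms into a perfect square and yields the clean recurrence
\begin{equation}
(k+1)\,T_k=(k-1+\alpha)\,T_{k-1}+\bigl(L_k^{\alpha}(x)-L_{k-1}^{\alpha}(x)\bigr)^2.
\end{equation}
With the base value $T_0=[L_0^{\alpha}]^2-L_{-1}^{\alpha}L_1^{\alpha}=1>0$ and the nonnegative coefficient $k-1+\alpha$, induction gives $T_k\ge 0$ at once; strict positivity for $x<0$ follows because the first square $\bigl(L_1^{\alpha}(x)-L_0^{\alpha}(x)\bigr)^2=(\alpha-x)^2>0$ forces $T_1>0$, after which the nonnegative recurrence preserves strict positivity at every subsequent step.

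The main obstacle is discovering the combination that produces the perfect-square recurrence: substituting the recurrence directly into $T_k$ leaves an indefinite quadratic form, and only after also rewriting $T_{k-1}$ through the recurrence on the adjacent index and subtracting do the cross terms cancel into $\bigl(L_k^{\alpha}-L_{k-1}^{\alpha}\bigr)^2$. A secondary point to handle carefully is the range of $\alpha$, since the coefficient $k-1+\alpha$ must remain nonnegative for the induction to propagate; this is automatic for the orders $\alpha\ge 0$ relevant here but would require treating the first step separately for $-1<\alpha<0$. Likewise the inequality is strict only for $x<0$ (at $x=0$ with $\alpha=0$ all $L_k^{\alpha}$ equal $1$ and equality holds), which is precisely the regime in which Theorem~\ref{theorem 1} is invoked.
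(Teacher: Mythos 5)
Your proof is correct, and it reaches the conclusion by a genuinely more self-contained route than the paper. The paper's proof simply invokes the Tur\'an-type inequality $[L_{k}^{\alpha}(x)]^2>L_{k+1}^{\alpha}(x)L_{k-1}^{\alpha}(x)$ as a known result (citing Skovgaard) and chains consecutive instances multiplicatively, using positivity of $L_k^{\alpha}(x)$ for $x\le 0$ to cancel common factors. You perform the same reduction --- the determinant inequality is equivalent to strict monotonicity of the ratio $r_k=L_k^{\alpha}/L_{k-1}^{\alpha}$, hence to positivity of the consecutive Tur\'an expressions $T_k$ --- but you then \emph{prove} the Tur\'an inequality from scratch via the identity $(k+1)T_k=(k-1+\alpha)T_{k-1}+\bigl(L_k^{\alpha}(x)-L_{k-1}^{\alpha}(x)\bigr)^2$, which I have verified follows from the three-term recurrence exactly as you describe. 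What your approach buys is twofold: the argument needs no external reference, and it makes the hypotheses honest. In particular, you correctly restrict positivity of $L_k^{\alpha}(x)$ to $\alpha>-1$ (the paper's blanket claim that $L_n^{\alpha}(x)>0$ for all real $\alpha$ and $x\le 0$ is false, e.g.\ $L_1^{\alpha}(0)=1+\alpha<0$ for $\alpha<-1$, though only $\alpha\in\{0,1\}$ is ever used), and you correctly note that the inequality degenerates to equality at $x=0$ with $\alpha=0$, so the lemma as stated with ``$x\le 0$'' and strict inequality is slightly too strong; neither caveat affects the application in Theorem~\ref{theorem 1}, where the arguments are strictly negative. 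The cost is the extra work of discovering the perfect-square recurrence, but the telescoping identity you write down is exactly equivalent to the paper's multiplicative chaining, so nothing is lost in generality.
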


\begin{proof}
According to the Laguerre inequalities of the Turán type \cite{skovgaard1954inequalities}, we have
\begin{equation}\label{Turan inequalities}
\begin{aligned}
L_{m-1}^{\alpha}(x)L_{m-1}^{\alpha}(x)&>L_{m}^{\alpha}(x)L_{m-2}^{\alpha}(x)\\
L_{m-2}^{\alpha}(x)L_{m-2}^{\alpha}(x)&>L_{m-1}^{\alpha}(x)L_{m-3}^{\alpha}(x)\\
&\cdots\\
L_{n}^{\alpha}(x)L_{n}^{\alpha}(x)&>L_{n+1}^{\alpha}(x)L_{n-1}^{\alpha}(x).
\end{aligned}
\end{equation}
For $x\leq 0$, the inequality $L_{n}^{\alpha}(x) > 0$ always holds. Therefore, we can multiply all the terms at the left side and the terms at the right side of \eqref{Turan inequalities} without changing the direction of the inequality sign. By canceling the same terms in both sides, we have $L_{m-1}^{\alpha}(x)L_{n}^{\alpha}(x)>L_{m}^{\alpha}(x)L_{n-1}^{\alpha}(x)$.

\end{proof}

\begin{lemma}\label{lemma_2}
When $x\leq 0$ and $m>n$, the function $f(x)\triangleq \frac{L_{m}(x)}{L_n(x)}$ is a decreasing function of $x$.
\end{lemma}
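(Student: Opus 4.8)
The plan is to prove the statement by differentiating $f$ and reducing the claim $f'(x)\le 0$ to a polynomial inequality that Lemma~\ref{lemma_1} can settle. Throughout I would use two standard facts about the ordinary Laguerre polynomials $L_k=L_k^0$: first, $L_k(x)>0$ for every $x\le 0$ (all zeros of $L_k$ lie in $(0,\infty)$), so $f=L_m/L_n$ is smooth and $L_n(x)^2>0$ on $(-\infty,0]$; second, the differentiation/summation identity $L_k'(x)=-L_{k-1}^1(x)=-\sum_{j=0}^{k-1}L_j(x)$.

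First I would compute $f'(x)=\big(L_m'L_n-L_mL_n'\big)/L_n^2$ and, since the denominator is positive, note that $f'(x)\le 0$ is equivalent to $L_m'(x)L_n(x)-L_m(x)L_n'(x)\le 0$. Substituting the derivative identity turns this into $L_m(x)L_{n-1}^1(x)\le L_{m-1}^1(x)L_n(x)$, and the summation identity then recasts it purely in terms of $\alpha=0$ polynomials as
\begin{equation}
\Big(\sum_{i=0}^{m-1}L_i(x)\Big)L_n(x)\ \ge\ L_m(x)\Big(\sum_{j=0}^{n-1}L_j(x)\Big).
\label{eq:planmain}
\end{equation}
This inequality, for $x\le 0$ and $m>n$, is what I actually have to establish.

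To prove \eqref{eq:planmain} I would read Lemma~\ref{lemma_1} (with $\alpha=0$) as the statement that the consecutive ratio $\rho_k(x):=L_k(x)/L_{k-1}(x)$ is strictly decreasing in the index $k$, i.e.\ $\rho_p<\rho_q$ whenever $p>q$ (divide $L_{p-1}L_q>L_pL_{q-1}$ by $L_{p-1}L_{q-1}>0$). Telescoping gives the \emph{equal-width} comparison I need: for the shift $w=m-n\ge 1$ and any $j<n$, write $L_{j+w}/L_j=\prod_{k=j+1}^{j+w}\rho_k$ and $L_m/L_n=\prod_{k=n+1}^{n+w}\rho_k$, and match the two products factor by factor; since $j+\ell<n+\ell$ forces $\rho_{j+\ell}\ge\rho_{n+\ell}$, the left product dominates, giving $L_{j+(m-n)}(x)\,L_n(x)\ge L_j(x)\,L_m(x)$. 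Summing over $j=0,\dots,n-1$ bounds the right-hand side of \eqref{eq:planmain} by $\sum_{i=m-n}^{m-1}L_i(x)\,L_n(x)$, which is a sub-sum of the left-hand side; the omitted terms $L_0,\dots,L_{m-n-1}$ are all positive, so \eqref{eq:planmain} follows (in fact strictly), whence $f'(x)<0$ and $f$ is decreasing.

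I expect the main obstacle to be the bookkeeping in the two middle steps rather than any deep analysis. Differentiation inevitably produces the mixed expression $L_m L_{n-1}^1$ versus $L_{m-1}^1 L_n$, so one must convert back to a uniform order parameter via $L_{k-1}^1=\sum_{j<k}L_j$ before Lemma~\ref{lemma_1} applies, since Lemma~\ref{lemma_1} only compares polynomials of equal order parameter. The second delicate point is choosing the pairing $j\mapsto j+(m-n)$ so that every factor comparison has the correct direction and the leftover terms carry a helpful sign; a careless pairing yields terms of indeterminate sign. An equivalent and perhaps tidier route would reduce to the consecutive case by writing $f=\prod_{k=n+1}^{m}\rho_k$ as a product of positive decreasing functions, but it still rests on the same derivative-plus-Lemma~\ref{lemma_1} argument for width one. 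Finally, the boundary case $n=0$ (where $L_{-1}^1:=0$) should be checked separately, but it is immediate, since \eqref{eq:planmain} then reads $\sum_{i=0}^{m-1}L_i(x)\ge 0$.
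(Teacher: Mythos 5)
Your proposal is correct, and it shares the paper's skeleton---differentiate $f$, use $L_k'=-L_{k-1}^1$, and reduce $f'<0$ to the cross-product inequality $L_{m-1}^1L_n> L_mL_{n-1}^1$---but it resolves that inequality by a genuinely different route. The paper homogenizes \emph{upward}: it substitutes $L_k=L_k^1-L_{k-1}^1$ so that the numerator collapses in one line to $-L_{m-1}^1L_n^1+L_m^1L_{n-1}^1$, and then invokes Lemma~\ref{lemma_1} a single time at $\alpha=1$. You homogenize \emph{downward}: you expand $L_{k-1}^1=\sum_{j<k}L_j$, reducing everything to $\alpha=0$ polynomials, and then prove $\bigl(\sum_{i=0}^{m-1}L_i\bigr)L_n\ge L_m\bigl(\sum_{j=0}^{n-1}L_j\bigr)$ by the ratio-monotonicity $\rho_k=L_k/L_{k-1}$ decreasing in $k$ (which is exactly Lemma~\ref{lemma_1} at $\alpha=0$, divided through by positive quantities), telescoped factor-by-factor under the shift $j\mapsto j+(m-n)$. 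Your pairing is chosen correctly, the leftover terms $L_0,\dots,L_{m-n-1}$ are indeed positive on $(-\infty,0]$, and the $n=0$ boundary case is handled; so the argument closes. What the paper's version buys is brevity---three lines of algebra and one application of the Turán-type inequality; what yours buys is that it only ever uses Lemma~\ref{lemma_1} for the ordinary ($\alpha=0$) Laguerre polynomials, and it makes the strictness of the inequality and the role of positivity fully explicit. Since Lemma~\ref{lemma_1} is stated for all real $\alpha$, both instantiations are equally available, and the paper's shortcut is the tidier choice, but nothing in your plan fails.
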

\begin{proof}
The derivative of $f(x)$ can be obtained as
\begin{equation}
\begin{aligned}
\frac{\mathrm{d}f(x)}{\mathrm{d}x}&=\frac{-L_{m-1}^1(x)L_n(x)+L_m(x)L_{n-1}^1(x)}{L_n(x)^2}\\
&=\frac{-L_{m-1}^1(x)[L_n^1(x)-L_{n-1}^1(x)]+[L_m^1(x)-L_{m-1}^1(x)]L_{n-1}^1(x)}{L_n(x)^2}\\
&=\frac{-L_{m-1}^1(x)L_n^1(x)+L_m^1(x)L_{n-1}^1(x)}{L_n(x)^2}\\
&<0
\end{aligned}
\end{equation}
\noindent where in the last step we have used Lemma \ref{lemma_1}. Therefore, $f(x)$ is a decreasing function of $x$.
\end{proof}
\balance
\end{document}